\newtheorem{theorem}{Theorem}[section]
\newtheorem{lemma}[theorem]{Lemma}
\newtheorem{definition}[theorem]{Definition}
\title{On the Local Communication Complexity\\ of Counting and Modular Arithmetic}
\author{

Bala Kalyanasundaram\\
Georgetown University\\
kalyan@cs.georgetown.edu

\and

Calvin Newport\\
Georgetown University\\
cnewport@cs.georgetown.edu

}
\date{}
\begin{document}

\maketitle

\begin{abstract}
In standard number-in-hand multi-party communication complexity, performance
is measured as the total number of bits transmitted globally in the network.
In this paper, we study a variation called local communication complexity
in which performance instead measures
the maximum number of bits sent or received  at any one player.
We focus on a simple model where $n$ players, each with one input bit,  execute a 
protocol by exchanging messages to compute a function on the $n$ input bits.  
We ask  what can and cannot be solved with a small local communication
complexity in this setting.
We begin by establishing a non-trivial lower bound
on the local complexity for a specific function
by proving that counting the number of $1$'s among the first $17$ input
bits distributed among the participants requires a local
complexity strictly greater than $1$.
We further investigate whether harder counting problems of this type
can yield stronger lower bounds,
providing a largely negative answer by showing that constant local
complexity is sufficient to count the number $1$ bits over the entire input, and therefore
compute any symmetric function. 
In addition to counting, we show that both sorting and searching can be computed in constant local complexity.
We then use the counting solution as a subroutine to demonstrate
that constant local complexity is also sufficient to compute many
standard modular arithmetic operations on two operands,
including: comparisons, addition, subtraction, multiplication, division, and exponentiation.  Finally we establish that function $GCD(x,y)$ where $x$ and $y$ are in the range $[1,n]$ has local complexity of $O(1)$.
Our work highlights both new techniques for proving lower bounds on this metric and the power of even a small amount of local communication.

\end{abstract}

\section{Introduction}
\label{sec:introduction}

In the standard study of number-in-hand multi-party communication complexity
the input bits for a function $f$ of size $n$ are partitioned among two or more
players.
The goal is for these players to work together to compute $f$ over the input bits
by transmitting information using network channels or a shared blackboard.
In the deterministic context, the {\em communication complexity} of a given protocol is the total number of bits transmitted
or recorded in the worst case over all possible inputs.

More recently, a natural variation was introduced that instead counts the maximum number
of bits sent or received by any one player~\cite{KM2015,boyle2018bottleneck}.
Though different names have been given to this measure, we call it {\em local (communication) complexity}.
In~\cite{KM2015}, the authors show 
connections between this metric and efficient distributed pattern recognition.
Later, in~\cite{boyle2018bottleneck},
the authors describe this metric as a ``fundamental area to explore,''
noting that measures of global communication obscure the local load at individual players,
a critical factor in settings where local processing is an important resource to conserve.
They further underscore this importance
by establishing formal theoretical connections between local complexity and multi-party secure
computation, streaming algorithms, and circuit complexity.

Informally, 
this model consists of $n$ players connected by network channels.
Each player gets a bit as input. 
They exchange messages according to some deterministic protocol to compute a function on these $n$ input bits.  
Each player maintains a single receive buffer into which all received bits are placed,
and they read from their buffers one bit at a time.
Executions proceed asynchronously to prevent the implicit encoding of information into silent rounds.

The study of local (communication) complexity remains in its early stages.
To date, for example, to the best of our knowledge
there are no known non-trivial lower bounds on the local
complexity of {\em specific} functions,
and only a small number of problems have been analyzed from the perspective of
identifying the number of bits that must be received locally.
The study of what can and cannot be solved with {\em small} local complexity
is further endorsed by the connection
between this model and linear-size circuits.
Strong lower bound on the local complexity of a specific function would imply strong lower bounds on 
the circuit complexity of the function.\footnote{In~\cite{boyle2018bottleneck}, the authors
note that proving a given function $f$ requires local complexity in $\Omega(n)$ implies a circuit complexity
in $\Omega(n^2)$, which would represent a major breakthrough in the study of the latter field.}

{\bf Results:} We start by  tackling the open problem of producing a lower bound on local complexity for a specific function.
We focus on the natural challenge of counting the number of $1$'s among the input bits,
as this seems intuitively difficult to accomplish when restricted to communicating only a very
small number of bits at each player.
For a given bit sequence $a=a_1a_2...a_n$,
let $\#_1(a_i...a_j)$, for $1\leq i \leq j \leq n$, 
be the number of $1$ bits in this substring of $a$.
We formalize the {\em $(n,k)$-counting} function, denoted $g_k^n$, as follows:  $g_k^n(a_1a_2...a_n) = \#_1(a_1a_2...a_k).$  
Assume $n$ players numbered $1$ to $n$,
such that each player $i$ receives input bit $a_i$.
We say a protocol executed by these players
{\em solves} $(n,k)$-counting if at least one player outputs $g_k^n(a_1a_2...a_n)$,
and no player outputs something different.

In Section~\ref{sec:lower},
we prove every solution to $(n,17)$-counting has a local complexity 
strictly greater than $1$.
Though it is intuitive that you cannot count too high with such a small complexity,
we emphasize that {\em establishing} such a claim is less obvious; requiring, in this case, 
a novel combination
of indistinguishability and information theory techniques.

We next explore whether we can strengthen this lower bound by increasing $k$.
A natural conjecture, for example, is that $\Omega(\log{k})$ local complexity is required
to count the $1$'s in the first $k$ input bits.
In Section~\ref{sec:upper}, 
we disprove this conjecture
with a protocol that solves $(n,\frac{n}{10})$-counting with local complexity $2$.
We then show how to solve $(n,n)$-counting, and therefore solve every symmetric function,
with a local complexity of only $11$.
These solutions borrow techniques from circuit design to recursively
apply distributed
adder circuits to aggregate the sums in an efficient distributed manner.

We conclude this study of counting by considering the two related problems of sorting and search.
Counting the number of $1$ bits provides a straightforward solution to the problem
of sorting the input bits, as in a setting with binary input values, sorting reduces to arranging
all the $1$ bits to precede the $0$ bits. With this in mind, we show
how to transform a counting solution into a 
sorting solution at the cost of only one extra bit of local complexity.
Less obvious is the problem of searching for the position of the $k$th $1$-bit among the $n$ inputs.
Deploying a more involved strategy,
we show how to solve this problem with constant complexity using our counting solution as a subroutine.

Having established that symmetric functions can be computed with constant local complexity,
we next turn our attention to the important class of {\em 2-symmetric} functions.
Recall, a function $f$ is called 2-symmetric (or bi-symmetric) 
if the binary input bits can be split into two groups such that the function value does not change when we permute inputs within each of the groups.
Applying our previous counting strategy,
we can compute $x$ and $y$ with constant local complexity, in the sense that one player in the first partition learns its group's count is $x$,
and one player in the second partition learns its group's count is $y$.
The question remains whether
we can move forward from here to efficiently compute interesting 
functions of the form $f(x,y)$.

In Section~\ref{sec:arithmetic},
we provide some positive answers to this question
by describing strategies for computing $x \leq y$, $x+y$, $x-y$, $xy$, $x/y$, and $x^y$, all
modulo $n$,\footnote{Modularity is required as we are dealing with unary encoding
of inputs and outputs.}
 with only a constant local communication complexity.
These results underscore the surprising power of computation with low local complexity,
and the importance therefore of our lower bound.
They also provide useful efficient subroutines for other computations,
as many interesting problems have algebraic representations. 
To underscore this final point,
we show, perhaps surprisingly, that $GCD(x,y)$ where $x$ and $y$ and in the range $[1,n]$ can also be computed in constant local complexity.

\smallskip

{\bf Discussion:}
We emphasize that to the best of our knowledge,
this work is the first to establish non-trivial bounds
for what can and cannot be solved in a distributed fashion 
with small local complexity. 
There are many different problems we might have considered.
Our choice to study basic distributed counting and arithmetic tasks were motivated
by two factors: (1) they are natural and simple to define; and (2) they yield
sharp computability thresholds (e.g., what can be counted with $2$ versus $1$ bit
of local complexity). 

Our decision to focus on deterministic protocols is similarly motivated by the simplicity
of starting with the cleanest possible problem and model definitions.
We note that the power of randomness for the problems studied here is not obvious,
especially when considering the careful deterministic structuring
of communication patterns often deployed by constant local complexity solutions.
Non-determinism, by contrast, can be shown to be strictly stronger than determinism.\footnote{It is
straightforward to show how to easily compute every 2-symmetric function with constant local
complexity with a non-deterministic protocol. As mentioned in Section~\ref{sec:arithmetic},
however, a straightforward counting argument establishes that there exist 2-symmetric
functions with local complexities in $\omega(1)$ for deterministic protocols.}


Of course, our work, combined with prior results~\cite{KM2015,boyle2018bottleneck},
still only scratches the surface when it comes to the deep 
exploration of local complexity.
Our goal here is not just to investigate this specific set of problems,
but to help instigate going forward the broader embrace of this intriguing and fundamental
metric by the distributed algorithm theory community. 

\section{Related Work}
The local communication metric studied here was introduced in~\cite{KM2015,boyle2018bottleneck}.
Our paper is perhaps best understood as a follow-up to~\cite{boyle2018bottleneck},
which motivated this model, but largely focused on problems with large local complexity,
leaving small local complexity as a topic for future exploration (a challenge we take on here).
Our formal model definition (Section~\ref{sec:model}) is somewhat more detailed than in~\cite{boyle2018bottleneck},
as such formality was needed to prove concrete lower bounds on specific functions.
Below we summarize existing work on communication complexity that predates and informs the work here
and in~\cite{KM2015,boyle2018bottleneck} on local complexity.

Naturally, local communication complexity
can be understood within the lineage of standard (global) communication complexity results, 
as it shares a commitment to minimizing the exact number of bits required for computing functions with inputs spread between players.
The study of (global) communication complexity started with Yao~\cite{Yao1979} in 1979.  
The main measure in this context is the total number of bits exchanged between two parties computing a function on their inputs.
Later, Chandra, Furst, and Lipton~\cite{CFL1983} introduced a {\em multi-party} communication complexity setting which is often referred to as a ``number on the head'' model:  there are now potentially more than $2$ players; each players knows all the other players' values, but not its own;
and they communicate by writing on a shared blackboard.
The complexity is the total number of bits displayed on the board.  Babai, Nisan and Szegedy~\cite{BNS1992}, among others,  
subsequently developed numerous bounds for this model (see the book by Kushilevitz and Nisan~\cite{KN1997} for an thorough review of this period).
Closer to our model is the subsequent work on so-called ``number in hand'' inputs, in which each player only knows its own input bits.
Numerous papers consider multi-party number~in-hand computation,
with several different communication assumptions: namely, the message passing,  blackboard, and coordinator models.  All of them measure the total number of bits sent/written in the network; e.g.,~\cite{CFL1983,BNS1992,Lotker,crama2011boolean,Drucker2014,Phillips2016,Yossef2004}.

Also relevant are {\em synchronous} models 
that similarly explore the amount of communication required to compute a function on data distributed among multiple
servers. 
In recent years, for example,
the massively parallel communication (MPC) model~\cite{beame2017communication}
has received increased attention. Inspired by Map Reduce/Hadoop-style systems (see~\cite{Jeffrey}), this model typically
bounds the amount of incoming communication at a given server in a given round by its local storage capacity.
The goal is to find good trade-offs between rounds and storage required to compute given classes of
functions (much of the early work in the MPC model, for example, focused on conjunctive queries on data~\cite{beame2017communication,ketsman2017worst}).
%
%
Closely related to these models is the congested clique (e.g.,~\cite{Lotker}),
in which data is distributed among servers in a fully-connected network, 
and communication bounds are now placed on channels.


The above summary only samples the many papers that study the communication required 
for the synchronous computation of distributed functions.
Though related in spirit to our work on local communication complexity, 
the results do not directly apply to our setting.
In these synchronous models, the goal is to reduce the number of rounds required to compute a function,
whereas we minimize the exact number of bits sent or received at every player.



\section{Model}
\label{sec:model}

Here we formalize our multi-party communication model and the local complexity metric we study in this setting.
Our definitions are more formal than in recent work on this metric~\cite{boyle2018bottleneck} as such specificity
is needed to study lower bounds for concrete functions. After introducing our formal definitions, we briefly discuss the specific
choices we made in attempting to nail down a model that balanced simplicity, tractability, and fidelity to existing work.

In more detail, we model a collection of $n>1$ deterministic computational processes (called both {\em players} and {\em nodes} in the following)
 executing in a variation of the standard asynchronous message passing model modified 
 to better suit the study of communication complexity.
We do not model messages arriving over discrete channels. Instead, the contents of 
 received messages are appended to a single string of received information that the receiver processes one bit at a time,
allowing for fine-grained control of exactly how many bits are consumed.

 \smallskip

{\bf Communication and Computation:}
Formally, let $V$ be the set of $n>1$ nodes in a fully-connected network topology.
We assume each node $u$ maintains a {\em receive string} $\psi_u$ which will store the bits of incoming messages.
This string is initialized to be empty.

For a given node $u$, if $\psi_u$ is not empty then the scheduler must eventually remove the first bit $b$ from $\psi_u$
and schedule a $recv(b)_u$ event at $u$.
As in the standard asynchronous message passing model,
when a $recv(b)_u$ event is scheduled,
node $u$ can update its state given the new bit $b$,
and send new messages to nodes in $V\setminus \{u\}$.
In more detail, if $u$ executes a $send(m,v)_u$ command during its processing of a $recv$ event,
for some message $m\in \{0,1\}^+$ and destination $v\in V\setminus \{u\}$,
then the bit string $m$ is appended to the end of $v$'s receive string $\psi_v$.
We treat the execution of the steps associated with a $recv(b)_v$ event,
including any $send$ commands, and the corresponding appending of sent message bits to other destinations' receive strings,
as one atomic step. Notably, this prevents bits in different messages from interleaving at a common receiver.

Also as in the standard asynchronous message model,
we assume that each node $u$ can also define an $init_u$ event,
which like a $recv(b)_u$ event can include $send$ commands.
For each $u\in V$, the scheduler must eventually schedule the $init_u$ event,
and it must do this before scheduling any $recv(b)_u$ events.
That is, each node gets a change to initialize itself before it starts processing incoming bits.

In this paper, we study algorithms that assume each player in the network is provided
a single bit as input.
For each node $u$ we use the notation $i_u$ to indicate $u$'s input bit.
Each node $u$ is also able to invoke $output(b)_u$ for $b\in \{0,1\}^+$, as part of its $recv$ and/or $init$ event step computation.
A {\em problem} in this setting can therefore be understood as a mapping from each possible binary input assignment
to an integer $\{0,1\}^+$ in binary, which we can express as a function of the form $f:\{0,1\}^n \rightarrow \{0,1\}^+$.
We say nodes in our model {\em solve} or {\em compute} such a function $f$ if provided input assignment $I\in \{0,1\}^n$,
at least one node outputs $f(I)$, and no node outputs anything different.

\smallskip

{\bf Local Communication Complexity:}
For a given execution $\alpha$,
and node $u\in V$, let $S_{\alpha}(u)$ and $R_{\alpha}(u)$ be the total number of bits sent and received by $u$, respectively, in $\alpha$.
We define the local communication complexity of a given $u\in V$ in $\alpha$, indicated $lcc_{\alpha}(u)$,
as follows: $lcc_{\alpha}(u) = \max\{ S_{\alpha}(u), R_{\alpha}(u) \}$.
We then define the local communication complexity of the entire execution $\alpha$, indicated $lcc(\alpha)$,
as $lcc(\alpha) = \max_{u\in V}\{ lcc_{\alpha}(u) \}$.
Let $P$ be a deterministic protocol.
We  define $lcc(P)$ to be the maximum $lcc(\alpha)$ value defined over every execution $\alpha$ of $P$.
Finally, for a given function $f:\{0,1\}^n \rightarrow \{0,1\}^+$, we define the local communication complexity of $f$,
also denoted $lcc(f)$,
to be the minimum $lcc(P)$ over every protocol $P$ that correctly computes $f$.

For the sake of concision,
we often use the slightly abbreviated phrase {\em local complexity} to refer to the local communication complexity
of a protocol or function.


\smallskip

{\bf Discussion:}
We opted for an asynchronous communication model as round numbers can leak information not captured by our complexity metric. 
We also avoided distinct channels for each sender/receiver pair as these channels provide for free
the identity of a given bit's sender. Because we focus in this paper on computing protocols with very small
local complexity, such leaks might end up significant.
Our solution to this issue was to introduce a common
receive buffer at each receiver on which incoming messages from all potential senders are appended.
In this setup, for example, if a sender wants to deliver a single bit to a receiver it can do so, and this bit will be appended to the receiver's buffer, but the receiver learns nothing about the source of the bit. If the sender wants the receiver to know who sent the bit, it has to actually send the up to $\log{n}$ bits required to encode its id.
%
%
Another solution to avoid pairwise channels would have been
to deploy a central coordinator through which all bits are sent 
(an approach sometimes deployed in the existing
global communication complexity literature), but this centralization seemed incompatible with our focus on the local number of bits
sent and received at each individual player.

We also note that several similar definitions of our local complexity metric are possible.
We define local complexity at a given player $u$ in a given execution $\alpha$ as the max of the number of bits it sent ($S_{\alpha}(u)$) and the number
of bits it received ($R_{\alpha}(u)$).
One alternative would be to focus only on $S_{\alpha}(u)$---that is, the bits sent---when measuring local complexity.
This trivially allows, however, all functions to be computed with a minimum complexity of $1$ by having all players
send their input bit to a single pre-determined leader who locally computes the function.

Another alternative is to measure only $R_{\alpha}(u)$, the bits received.
This metric also seems to provide too much power to the players.
It is not hard to show, for example,
that it enables the computation of every bi-symmetric function with $O(1)$ local complexity.
The basic idea is to deploy the counting routines we present later in this paper that enables
one player in the first partition to learn the count $x$ of $1$ bits in the first partition,
and one player in the second partition to learn the count $y$ of $1$ bits in the second partition.
A close look at the routines reveal that the player that learns the count is dependent
on the count itself (roughly speaking, if the count is $i$, then the $i^{th}$ player in the partition
learns this fact).
The player that learns the count $x$ from the first partition
can now send a $1$ bit to every player in the second partition for which that player's corresponding
count would cause the bi-symmetric function to evaluate to $1$.

Giving these observations, our use of the maximum of $S_{\alpha}(u)$ and $R_{\alpha}(u)$ seemed  
the right choice to capture our intuitive understanding of local complexity, while avoiding
sweeping  solutions to large classes of problems.
More generally, we emphasize that there is rarely an obvious {\em best} way to model and measure multi-party communication complexity,
as evidenced by the variety of definitions in the existing literature. And as we have learned, all decisions
in such modelling evince  trade-offs.
We did our best here to arrive at a natural and straightforward definition that captures the local communication
we wish to study while sidestepping both trivializing assumptions and artificial difficulties.

\section{Counting Lower Bound}
\label{sec:lower}

A natural starting place to study what can and cannot be solved with small local communication complexity
is the fundamental task of counting.
In more detail, we study the local communication complexity of solving
 $(n,k)$-counting function.
Formally, we seek to identify a parameter $k$ for the 
counting function $g_k^n$ (defined in the introduction),
such that the local complexity of the function is strictly greater than $1$.
The core result of this section is a lower bound that establishes for any sufficiently large $n$,
$lcc(g_{17}^n) > 1$. 
We emphasize that is the first known lower bound on local complexity for a concrete function
(prior work~\cite{boyle2018bottleneck} contains only existential bounds based on counting arguments).

\subsection{Proof Summary}

At a high level, 
our proof strategy begins by focusing on local complexity
of the related {\em $(n,k)$-threshold detection} problem,
which requires the nodes to determine if at least $k$ of the input bits are $1$.
We prove that any protocol that solves this problem with local complexity
$1$  is highly
constrained in its operation, generating executions that can be understood as a
bit traveling in a chain, from one node to the next, with the final node making a decision.

Given such a structure,
we apply a combinatorial argument to argue that for a sufficiently large
constant threshold $k$, we
can construct two execution chain prefixes such that: (1) the correct output
is different for each chain (i.e., one chain has enough $1$'s to exceed the threshold
while the other does not); and (2) the node at the end of both prefixes sends
the same bit to the next link, obfuscating the actual contents of its predecessors.
The existence of these two prefixes can be deployed to generate an incorrect answer
in at least one of the two cases, contradicting the assumption that any algorithm
correctly solves threshold detection for this $k$ parameter.

Finally, once we bound threshold detection, we then use a reduction argument to obtain
our final bound for the more natural counting problem.

\subsection{Bounding Threshold Detection}
We begin by proving a lower bound on the local complexity of the
 {\em $(n,k)$-threshold detection} boolean function, 
that evaluates to $1$ if and only if at least $k$ out of $n$ input bits are $1$.
Formally, we use $f_k^n$ to indicate this function for a given pair of parameters $k$ and $n$,
and define it as:

\[f_k^n = \begin{cases}
1 & \text{if $\#_1(a_1...a_n) \geq k$,}\\
0 & \text{else.}
\end{cases}\]

Our goal is to prove that the following,
which establishes for sufficiently large $n$ value that threshold detection for $k=9$
requires a local complexity greater than $1$.

\begin{theorem}
\label{counting-lower}
Fix some network size $n>9$, threshold $k, 9 \le k \le n-8$.
It follows: $lcc(f_k^n) > 1$.
\end{theorem}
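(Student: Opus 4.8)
The plan is to argue by contradiction: assume a deterministic protocol $P$ solves $(n,k)$-threshold detection with $lcc(P)=1$, and derive a contradiction for every $k$ with $9\le k\le n-8$.

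\textbf{Step 1 (a rigid structural normal form).} First I would extract a very constrained picture of $P$'s executions from the hypothesis $lcc(P)=1$. Because every node sends at most one bit and has at most one bit ever appended to its receive string (a second appended bit would be forced into a $recv$ event by the scheduler, pushing some $lcc_{\alpha}(u)$ to $2$), the delivered messages in any execution form a vertex-disjoint union of simple directed paths and cycles, and each component's internal token flow is a function only of the inputs of its member nodes. The node that produces the output must itself have received a bit, since outputting from one's own input bit alone cannot be correct for $f_k^n$, which genuinely depends on every coordinate whenever $1\le k\le n$; hence the output node lies on one of these components. An indistinguishability argument — given an execution, reschedule on any input agreeing with the old one on that component's nodes so as to realize the same component with the same output, and conclude $f_k^n$ must agree too — shows $f_k^n$ depends only on that component's coordinates, so the component must contain all $n$ nodes; and a short mixing argument (if two distinct nodes could each be an $init$-sender, combine inputs so both fire, producing two disjoint spanning components) forces the component to be entered, in every execution, at one fixed node $v_1$ that always sends at $init$. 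Thus every execution of $P$ is completely determined by the input: it is a single Hamiltonian ``token chain'' $v_1\to v_2\to\cdots\to v_n$ (possibly closing back to the output node), where $v_1$ sends a bit and destination depending only on $i_{v_1}$, each subsequent $v_j$ receives one bit $b_{j-1}$ and computes the forwarded bit $b_j$ and the next node $v_{j+1}$ from $(i_{v_j},b_{j-1})$, and the final node outputs a value of its input and the single bit it received, which must equal $f_k^n(I)$. This is precisely a width-$2$ branching program reading the inputs in an adaptive order.

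\textbf{Step 2 (forcing a collision among short prefixes) --- the main obstacle.} Next I would view partial executions as a binary tree that branches at step $j$ on the value $i_{v_j}$. A partial execution of length $p$ determines the set $W$ of the $p$ acted-on nodes, the node $v\notin W$ the token is heading to, the token value $b\in\{0,1\}$, and the number $c\in\{0,\ldots,p\}$ of $1$'s assigned so far. The goal is to show, by a pigeonhole/exchange argument exploiting that the token carries a single bit, that within some bounded number of steps $p\le 8$ there exist two partial executions agreeing on $(W,v,b)$ but differing in $c$ --- intuitively, one bit of state cannot simultaneously remember which nodes have been visited and how many of their inputs were $1$. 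I expect this to be the hard part: the chain's routing is itself input-dependent, so the construction must be steered so that the two prefixes visit the \emph{same} node set $W$ (otherwise they cannot later be completed identically), while still realizing two distinct counts. This is exactly where the theorem's constants are spent, since the next step needs the collision depth to be at most $\min(n-k,\,k-1)$, and that quantity is at least $8$ precisely when $9\le k\le n-8$.

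\textbf{Step 3 (completing both prefixes identically).} Finally, given a colliding pair of prefixes $E,E'$ with common residual node set $V\setminus W$ of size $n-p$ ($p\le 8$), common token $b$ at common node $v$, and counts $c_E>c_{E'}$, I would extend both with one and the same input assignment $\rho$ to $V\setminus W$. Since both prefixes leave the chain in an identical configuration and the remaining nodes and their inputs are identical, the two continuations are identical and the final node outputs the same value on both resulting full inputs; yet these inputs have $c_E+t$ and $c_{E'}+t$ ones, where $t=\#_1(\rho)$. Choosing $t$ with $k-c_E\le t\le k-c_{E'}-1$ makes exactly one input meet the threshold: this interval is nonempty because $c_E\ge c_{E'}+1$, its left endpoint is at most $n-p$ because $k\le n-8\le n-p$, and its right endpoint is nonnegative because $c_{E'}\le p\le 8\le k-1$, so a valid $t\in[0,n-p]$ exists. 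Then $P$ errs on one of the two inputs, contradicting correctness, so $lcc(f_k^n)>1$; the corresponding bound for the counting function will then follow from the reduction developed afterward.
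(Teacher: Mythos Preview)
Your Step~2 is a genuine gap, and the approach as stated is too restrictive to be completed. You require the two colliding prefixes to share not only the next node and token bit $(v,b)$ but also the \emph{visited set} $W$. Pigeonhole cannot force this: at depth $p$ there are $2^p$ partial executions, but the routing is input-dependent, so all $2^p$ of them may have pairwise distinct $W$'s, and the number of possible $(W,v,b)$ triples is not small. The paper's proof avoids this by pigeonholing only on the pair $(P_i,b)$, over \emph{all} depths simultaneously: for $n=17$, $k=9$ there are $80$ valid value pairs $(x,y)$ (count-so-far, length) but only $34$ possible $(P_i,b)$, so some $(P_i,b)$ is reached by at least three chains. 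Crucially, the paper then does \emph{not} assume the two chosen chains $X,Y$ have the same player set; instead it splits into two cases. When $players(X)\neq players(Y)$, there is some $P_j\in players(X)\setminus players(Y)$, and since any extension $Z$ of $Y$ is also a legal extension of $X$ (both end in the identical step $(P_i,b)$), $Z$ can never touch $P_j$, so the output is independent of $P_j$'s input---a contradiction. Your proposal has no analogue of this argument.

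Your Step~3 also has a hole even in the favorable case $W_X=W_Y$. The initiator $v_1\in W$ has \emph{sent} a bit but not yet \emph{received} one, so a common extension of the two prefixes may legally terminate at $v_1$; at that point $v_1$'s output depends on its own input bit, which need not agree across the two prefixes (you only fixed $(W,v,b)$, not the assignment on $W$). The paper handles exactly this by using that its pigeonhole yields \emph{three} chains at $(P_i,b)$, hence two of them give the common initiator the same input bit, so the loop-back cannot distinguish them. You would need a comparable device.

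Finally, the paper establishes the base case $k=9$, $n=17$ and then reduces every $(n,k)$ with $9\le k\le n-8$ to it by having $17$ real players locally simulate the extra $n-17$ players with $k-9$ fixed $1$'s; your direct-for-all-$(n,k)$ framing is fine, but only once the two gaps above are closed.
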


Before proceeding to main proof of this theorem, we establish some useful preliminaries
that formalize the constraints suffered by any threshold detection algorithm
with a minimum local complexity of $1$.
In the following,
we use the notation $P_1, P_2, \ldots P_n$ to represent the $n$ players.
We say that $P_i$ is an {\em initiator} with respect to a given input bit
if its initialization code for that input bit has it transmit a bit before receiving any bits.
A key property of a minimal local complexity environment is that a correct
protocol can only ever have one initiator:

\begin{lemma}
Fix some $n>4$,  $k$, $3<k<n$, and protocol ${\cal P}$ that computes $f_k^n$ with $lcc({\cal P}) = 1$.
There exists a player $P_i$ such that for every input assignment, 
$P_i$ is the only initiator among the $n$ players.
\label{lem:1bit:lem1}
\end{lemma}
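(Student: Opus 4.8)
The plan is to reduce the lemma to two claims about the (input-dependent) set of initiators and then use an asynchronous ``isolation'' argument for the harder one. First observe that whether $P_j$ is an initiator is a function of $i_{P_j}$ alone, since its initialization code depends only on its own input bit; let $A_0$ (resp.\ $A_1$) be the set of players that are initiators when handed input bit $0$ (resp.\ $1$), so that on assignment $I$ the initiator set is exactly $\{P_j : P_j\in A_{i_{P_j}}\}$. I would prove (a) every input assignment has at least one initiator, and (b) every input assignment has at most one initiator. From (a): if $A_0\cap A_1=\emptyset$, the assignment giving bit $1$ to every player of $A_0$, bit $0$ to every player of $A_1$, and arbitrary bits elsewhere has no initiator, a contradiction; hence $A_0\cap A_1\neq\emptyset$. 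From (b): if $|A_0\cup A_1|\ge 2$, pick distinct $P_j\in A_c$ and $P_{j'}\in A_{d}$ and take the assignment with $i_{P_j}=c$, $i_{P_{j'}}=d$ to get two initiators, a contradiction; hence $|A_0\cup A_1|\le 1$. Together these force $A_0=A_1=\{P_i\}$ for a single player $P_i$, which is then the unique initiator on every input.

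For (a) I would first record the auxiliary fact that \emph{no player ever outputs during its $init$ event}. A player $P_j$'s $init$ event is determined by $i_{P_j}$, so if $P_j$ ever outputs value $o$ during $init$ when its bit is $c$, then it does so in every execution on every assignment with $i_{P_j}=c$; correctness then forces $o=f_k^n(I')$ for all such $I'$, contradicting the fact that, because $n>4$ and $3<k<n$, $f_k^n$ is non-constant on the subcube fixing $i_{P_j}=c$ (one can exhibit assignments there with fewer than $k$ ones and with at least $k$ ones, for either value of $c$). Consequently every output occurs at a $recv$ event. Now if some assignment had no initiator, then in its execution no player ever sends --- each player's only opportunity to send before receiving is its $init$, which it declines, and no $recv$ event fires without a prior send --- so no $recv$ event ever fires and no output is produced, contradicting correctness. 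Hence (a).

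The crux is (b). Suppose some assignment $I$ has distinct initiators $P_a$ and $P_b$. Exploiting asynchrony, consider the valid execution that runs all $init$ events and then delivers only the bits causally descending from $P_a$'s $init$ message, deferring every other pending bit indefinitely. Because $lcc(\mathcal{P})=1$ forces every player to send at most one bit and receive at most one bit, these delivered bits form a single simple chain $C_a=(P_a\to v_1\to\cdots\to v_m)$, and the only $recv$ events in this execution lie on $C_a$; by the auxiliary fact the output is produced on $C_a$, and since each node's lone forwarded bit depends only on its own input bit and its lone received bit, the output --- hence $f_k^n(I)$ --- is a deterministic function of $(i_u)_{u\in C_a}$. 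Moreover this chain is self-determining: its node set, the bits it carries, and its output depend only on $I$ restricted to $C_a$ (note $i_{P_a}$ is fixed there, so $P_a$ remains an initiator), so any $I'$ agreeing with $I$ on $C_a$ induces the identical chain and identical output in the analogous execution, forcing $f_k^n(I')=f_k^n(I)$; that is, $f_k^n$ is constant on the subcube fixing $I|_{C_a}$. I would then contradict this by showing $C_a$ can be kept a \emph{proper} subset of the players while the fixed part still leaves $\#_1$ free to straddle $k$: the chain $C_a$ cannot contain all $n$ players, since otherwise its final node would output a function of its own bit together with a single received bit summarizing the other $n-1$ inputs, and no single bit can resolve the three remainder-counts $k-2,k-1,k$ that a threshold function must distinguish; and by choosing $I$ inside the subcube $\{i_{P_a}=b_a,\ i_{P_b}=b_b\}$ (keeping both $P_a,P_b$ initiators) and using that the initiator $P_b$, having already spent its single send in $init$, cannot relay and hence truncates $C_a$ if reached, I can control how many ones sit on $C_a$ and how much slack the free coordinates retain --- enough to place $\#_1$ on both sides of $k$ within the subcube fixing $I|_{C_a}$, contradicting the constancy just established.

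I expect this last step --- pinning down that a two-initiator assignment always admits such a short, slack-preserving isolated chain --- to be the main obstacle; the remainder is bookkeeping around the causal-chain structure forced by unit local complexity, together with the elementary observations about $f_k^n$ on subcubes.
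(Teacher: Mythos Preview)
Your overall structure via the sets $A_0, A_1$ is cleaner than the paper's and correctly reduces the lemma to claims (a) and (b). Your argument for (a), including the auxiliary fact that no player outputs during $init$, is correct and essentially matches the paper's reasoning.

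The genuine gap is in your argument for (b). You write that in the execution delivering only $C_a$'s bits, ``by the auxiliary fact the output is produced on $C_a$.'' But the auxiliary fact only says that any output occurs at a $recv$ event; it does \emph{not} say that an output occurs at all in this partial execution. A scheduler that defers $P_b$'s pending bit indefinitely does not yield a valid complete execution, and in a valid complete execution the protocol may legitimately withhold output until bits from $P_b$'s chain arrive. So the chain $C_a$ can terminate (for instance at $P_b$, which has already exhausted its single send) without any node on it outputting, and your subcube-constancy conclusion does not follow. This is not ``bookkeeping around the causal-chain structure''; it is the heart of the matter.

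The paper sidesteps this by arguing more concretely: it fixes a carefully chosen assignment with exactly $k$ ones (keeping both initiators present), runs a \emph{full} round-robin execution so that some player $P_z$ actually outputs $1$, and observes that $P_z$ lies on at most one of the two initiator chains. One then flips a single bit on the \emph{other} chain---specifically the bit of that chain's first recipient, which was deliberately set to $1$---to drop the count to $k-1$, and reschedules so that $P_z$'s chain runs first; $P_z$ still outputs $1$, now incorrectly. This avoids both your ``$C_a$ outputs'' assumption and your separate ``$C_a \neq [n]$'' step, whose three-remainder-count argument is itself shaky: the chain is adaptive, so the last node's received bit is not a function of the other players' \emph{count} alone, and varying that count generally alters the chain structure itself.
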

\begin{proof}
We first argue that there must be at least one initiator.
Assume for contradiction that for some input assignment, $b_1,b_2, ..., b_n$, 
there are no initiators.
It follows that no players send or receive any bits.
Because we assume ${\cal P}$ correctly computes $f_k^n$, some player must output the correct answer without
ever having received any bits.
Let $P_i$ be a player that outputs. If it outputs $1$, meaning
there is at least $k$ input bits set to $1$ in our fixed assignment,
it will do the same even when set all other input bits to $0$---leading to an incorrect output.
Symmetrically, if $P_i$ outputs $0$,
it will do the same when we set all other input bits to $1$---leading to an incorrect output.
This contradicts the correctness of ${\cal P}$.

Moving forward, therefore,
we consider the case in which there are more than one initiator.
Once we have established that there cannot be more than one initiator, 
we will show that this one initiator must be the same for all input assignments.
Assume for contradiction that there exists some input assignment, $b_1,b_2, ..., b_n$
for which ${\cal P}$ has more than one initiator.
Let $P_i$ and $P_j$ be two such initiators.
Assume that the initialization code for $P_i$ with input bit $b_i$ has $P_i$ send bit $b_i'$ to player $P_x$,
and the initialization code for $P_j$ with $b_j$ has it send $b_j'$ to $P_y$.
Using these observations on the behavior of $P_i$ and $P_j$ we will identify an input assignment, $\hat b_1, \hat b_2, ..., \hat b_n$
that we can leverage to identify a contradiction.

Fix $\hat b_i = b_i$ and $\hat b_j = b_j$.
Fix $\hat b_x = \hat b_y = 1$.
Fix input values for the remaining players such that the total number
of $1$ bits in the assignment is exactly $k$ (because we assume $k \geq 4$,
this is always possible).

Consider an execution of ${\cal P}$ with assignment $\hat b_1, \hat b_2, ..., \hat b_n$.
Because this is an asynchronous systems an execution for a given input can depend
on the scheduling of send and receive events.
Assume a {\em round robin} scheduler that proceeds in {\em rounds} as follows:
During the first round, it visits each player in order $P_1, P_2$, and so on,
scheduling each player to complete its initialization transmission (if any).
In each subsequent round, it visits each player in order, for each, scheduling
the processing of bits transmitted in the previous round, and then completing
any new transmissions these received bits generate.
Call this execution $\alpha$.

In this execution we can break up communication into what we call {\em chains},
which capture the causal relationship of sends and receives beginning
with a given root player.
For example, if we fix $P_i$ as a root, and note that $P_i$ sends a bit to $P_x$,
which then enables $P_x$ to send a bit to some $P_{x'}$, and so on,
we note that there is a chain rooted at $P_i$ that begins $P_i, P_x, P_x',...$

Moving on, we note that by construction: $f_k^n(\hat b_1, \hat b_2, ..., \hat b_n) = 1$.
It follows that at least one player must output $1$ in $\alpha$.
Fix one such player $P_z$ that outputs $1$.
We argue that $P_z$ cannot be in both the chains rooted at $P_x$ and $P_y$. 
If this was the case, then at some point as we followed the chain from $P_x$ to $P_z$,
and the chain from $P_y$ to $P_z$,
some node $P^*$ would have to be visited in both.
This would require $P^*$ to receive at last $2$ bits which is not allowed in a protocol with
a local complexity of $1$

Without loss of generality, assume that $P_z$ is 
not in the chain rooted at $P_x$ (the other case is symmetric).
Consider the execution $\alpha_x$,
in which: (1) the input bit to $P_x$ is changed to $0$;
(2) we replace the round robin scheduler with one that first schedules the nodes in the communication
chain from $P_j$ to $P_y$ and on to $P_z$, in order, leading $P_z$ to output.
After this, we can revert to the round robin scheduler strategy to ensure all pending players get
chances to take steps.

By construction, $\alpha_x$ is indistinguishable from $\alpha$ with respect to $P_z$.
Therefore, $P_z$ will output the same value in $\alpha_x$ as $\alpha$.
Because we flipped the input value of $P_x$ in $\alpha_x$,
this output is wrong. 
This contradicts the assumption that ${\cal P}$ always correctly computes $f_k^n$.

We have now established that every input assignment has exactly one initiator.
We want to now show that this initiator is the {\em same} for every assignment.
To do so, assume for contradiction that  assignment $I_i$ has $P_i$ as its single initiator,
and assignment $I_j\neq I_i$ has player $P_j \neq P_i$ as its single initiator.
Consider a third assignment $I_{i,j}$ which is defined the same as $I_i$ with the exception
that player $P_j$ is given the same bit as in $I_j$.
We have now identified an assignment with two initiators.
We argued above, however, that every assignment has at most one initiator: a contradiction.
\end{proof}

The above lemma established that executions of protocols for threshold functions with minimum local complexity
have a single initiator, meaning they can be described as a sequence of player/message pairs.
We provide some notation to formalize this idea:

\begin{definition}
Fix a protocol with a single initiator $P_{i_1}$ and a local complexity of $1$.
We can describe an execution prefix of this protocol containing the first $j$ transmissions with a
single {\em chain} of the form:
$X =  (P_{i_1}, m_1), (P_{i_2},m_2), \ldots , (P_{i_j}, m_j),$
 where for each $x, 1 \leq x \leq j$, $P_{i_x}$ is the $x^{th}$ player to receive a bit,
and the bit it receives is $m_{x}$. If $x<j$, then $m_{x+1}$ describes the bit $P_{i_x}$ sent in response
to receiving $m_x$. Define $players(X) = \{P_{i_1}, P_{i_2}, \ldots , P_{i_j} \}$.
Because $P_{i_1}$ is an initiator, by convention we set $m_1 = \emptyset$.
We use the notation $X, (P_{\ell},m)$, for some $P_{\ell} \notin players(X)$, 
to indicate the concatenation of step $(P_{\ell}, m)$
to the end of chain $X$.
\end{definition}

When considering a chain $X=(P_{i_1}, m_1), (P_{i_2},m_2), \ldots , (P_{i_j}, m_j)$ that describes
an execution prefix, we can label each {\em step} $(P_{\ell}, m_j)$ in the chain
with a {\em value pair}, $(x_j,y_j)$,
where $y_j = j$ is the number of players involved in the chain up to and including $P_{\ell}$,
and $x_j$ is the number of these players with an input bit of $1$.
The value pair for a given step captures, in some sense, a possible information
scenario could generate that given step.

When considering the value pairs for a chain of an execution prefix
of a protocol computing a threshold function $f_n^k$, 
we say a pair of numbers $(x,y)$ is {\em valid} if two things are true:
it is {\em well-formed},
in the sense that the observed values could show up as a value pair
for a step in a chain (e.g., $x$ is not greater than $y$);
and they are {\em bivalent},
in that the values are compatible with both an output of $0$ or $1$
as the chain extends, depending on the details of the extension.
Formally:

\begin{definition}
Suppose the function under consideration is $f_k^n$.
We say that a pair $(x,y)$ is {\em valid} with respect to this function if
the values are:
\begin{enumerate}
    \item {\em Well-Formed}: $1 \le y < n$ and $x \le y$.
    \item {\em Bivalent}: $0 \le x < k$ and $(k-x) \le (n-y)$.
\end{enumerate}

\end{definition}

When considering chains for an execution of a protocol that computes a given $f_k^n$ with a local
complexity of $1$, we might want to ask the question of what are the properties of
input bit assignments could possibly lead to a given step $(P_i, m)$.
We formalize this question by defining a set $D$ that captures all value pairs compatible
with a given step:

\begin{definition}
\noindent
Given a protocol ${\cal P}$ that computes a function $f_k^n$ with local complexity $1$,
a player $P_i$, $1 \leq i \leq n$,
and bit $b\in \{0,1\}$,
we define the set $D(P_i,b)$ to contain every pair $(x,y)$ that satisfies
the following properties;
\begin{enumerate}
    \item $(x,y)$ is valid with respect to $f_k^n$, and
    \item $\exists$ input assignment $I$ for ${\cal P}$ that induces a chain that includes
a step $(P_i, b)$ labeled with value pair $(x,y)$.
\end{enumerate}

\end{definition}

Before tackling our main theorem,
we have one last useful result to establish:
that every valid pair for a given $k$ and $n$ value can show up in some chain.

\begin{lemma}
Fix a protocol ${\cal P}$ that computes a function $f_k^n$ with local complexity $1$.
Let $(x,y)$ be any valid pair for $f_k^n$.
There exists a player $P_i$ and bit $b$,
such that the set $D(P_i, b)$, defined with respect to ${\cal P}$,
includes $(x,y)$.
\label{lem:lbit:2}
\end{lemma}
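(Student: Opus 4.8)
The plan is to prove the slightly stronger statement that \emph{every} valid pair $(x,y)$ occurs as the value pair labeling some step of some chain induced by an input assignment; reading off the player and received bit of that step then yields the required $P_i$ and $b$. So, given a valid $(x,y)$, I would build an input assignment $I$ whose induced chain has length at least $y$ and whose $y$-th step $(P_{i_y},m_y)$ is labeled $(x,y)$; since $(x,y)$ is valid by hypothesis, it then lies in $D(P_{i_y},m_y)$. (When $y=1$ the relevant step is the initiator's $(P_{i_1},\emptyset)$, which I would dispatch by the obvious convention.)

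The chain is built incrementally, fixing input bits as new players appear. By Lemma~\ref{lem:1bit:lem1} there is a fixed initiator $P_{i_1}$; I set its input bit $\beta_1$ so that $(\beta_1,1)$ sits on a ``staircase'' toward $(x,y)$, i.e.\ $\max(0,\,x-(y-1)) \le \beta_1 \le \min(1,x)$ --- such a choice in $\{0,1\}$ exists because $x \le y$. Inductively, suppose the chain has been extended to length $j<y$ with value pair $(x_j,j)$ satisfying the staircase invariant $\max(0,\,x-(y-j)) \le x_j \le \min(j,x)$. Two observations drive the step to length $j+1$. First, any pair obeying this invariant is automatically \emph{valid}: $x_j \le x < k$ gives the first bivalence condition, and $x_j \ge x-(y-j)$ combined with the bivalence $k-x \le n-y$ of $(x,y)$ yields $k-x_j \le n-j$, the second; well-formedness is immediate. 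Second --- the crux --- a chain that has reached a valid pair $(x_j,j)$ with $j \le n-2$ cannot terminate: its last player $P_{i_j}$ must forward a bit to some $P_{i_{j+1}}\notin players(X)$. Granting this, $P_{i_{j+1}}$ is a new player whose input is uncommitted, so I pick $\beta_{j+1}\in\{0,1\}$ keeping the invariant: $\beta_{j+1}=1$ (so $x_{j+1}=x_j+1$) while $x_j<x$, and $\beta_{j+1}=0$ otherwise. After $y$ such steps the chain has length $y$ and value pair $(x,y)$; completing $I$ by assigning the remaining $n-y$ players arbitrarily leaves this prefix --- hence the step $(P_{i_y},m_y)$ and its label $(x,y)$ --- intact.

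The real content is the non-termination claim, which I would establish by contradiction via indistinguishability. Suppose that, for the chain-player inputs fixed so far, $P_{i_j}$ forwards nothing. Then (using that local complexity $1$ together with a single initiator forces the causal structure to be a simple path) every completion of the input assignment produces an execution that halts after exactly $j$ transmissions, with each non-chain player activated only by its $init$ event, which sends nothing since it is not an initiator. Compare the completion placing $0$ at every non-chain player --- for which $f_k^n=0$ since $x_j<k$ --- with one placing enough $1$'s at non-chain players to hit the threshold, possible since $k-x_j\le n-j$, for which $f_k^n=1$. The chain players occupy identical states in the two executions, so by determinism no chain player may output; hence the mandatory output in the all-zero completion comes from a non-chain player $Q$ whose output depends only on its own (then $0$) input bit. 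Because $j\le n-2$ leaves at least two non-chain players, I can now move the number of $1$-bits across the threshold $k$ by toggling non-chain players other than $Q$, holding $Q$'s input at $0$; this forces $Q$ to repeat its ``$0$'' output when the correct answer is $1$, a contradiction. I expect the one genuinely fiddly point to be the boundary sub-case $k-x_j=n-j$, where reaching the threshold requires \emph{all} non-chain players to be $1$: there the same idea applies by comparing the all-non-chain-$1$ completion (answer $1$) against the completion obtained by flipping a single non-chain player --- chosen distinct from the one forced to output $1$ --- down to $0$ (answer $0$).
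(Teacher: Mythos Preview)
Your proposal is correct and takes essentially the same approach as the paper: build the chain incrementally, fixing each new player's input bit as it appears so that the value pair at step $y$ is exactly $(x,y)$, and invoke bivalence of the intermediate pairs to show the chain cannot halt before step $y$. The paper's own proof is terser (it dismisses non-termination as a ``straightforward contradiction argument'' and uses the specific schedule of $y-x$ zeros followed by $x$ ones rather than your staircase invariant), and, like yours, does not explicitly treat the sub-case where $P_{i_j}$ forwards back to the initiator $P_{i_1}$---but your indistinguishability argument handles that case unchanged, since the initiator cannot send a second time and the configuration remains bivalent.
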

\begin{proof}
Fix a ${\cal P}$, $k$, $n$, and $(x,y)$ as specified by the lemma statement.
By Lemma~\ref{lem:1bit:lem1}, every execution of ${\cal P}$ has a single initiator
and can be described by a chain.
We will create such a  chain step by step, setting
the input bit for each player in the chain only after they appear in the chain receiving their bit.
In more detail, for the first $y-x$ players that show up in the chain, we set their input to $0$.
For the remaining $x$ players, we set their input bits to $1$.
Notice that we can set these input bits after a player shows up in the chain,
because in a setting with local complexity $1$, after a player receives a bit,
if it cannot output, it must send a bit to keep the execution going, regardless
of its input. Its input bit {\em can} determine which player receives its transmission,
which is why we have to build this assignment dynamically as the chain extends.

A straightforward contradiction argument establishes that none of the first $y-1$
players in this chain can avoid transmitting, and therefore extending the chain.
This follows because, as constructed, this chain remains
bivalent until at least player $y$, in the sense that at every step, there exists an assignment
of input bits to the players that have not yet participated that makes $0$ the correct output,
and an assignment that makes $1$ the correct output.

Let $(P_i, b)$ be step $y$ in this chain.
By construction: $(x,y)\in D(P_i, b)$.
\end{proof}

We now have all the pieces required to tackle the proof
of Theorem~\ref{counting-lower} by deploying a novel combinatorial argument.
We begin by fixing $k=9$ and $n=17$.  
We show that every valid $(x,y)$ must show up in at least one $D(P_i,b)$ set.
Because there are fewer such sets than valid pairs,
the pigeonhole principle tells us that some $D(P_i,b)$ must have multiple
pairs. (It is here that the specific values of $k$ and $n$ matter, as they dictate the number
of possible valid pairs.)

At a high-level, that means when $P_i$ receives bit $b$ in a chain,
there are multiple possibilities regarding how many one bits appear
in the chain leading up to this step. Because $P_i$ cannot distinguish
between these value pairs we can, with care, craft
an execution extension in which the protocol outputs the wrong value.
In making this argument, extra mechanisms are required to deal with the possibility 
that the first player in the chain ends up the last player as well (this is possible
because an initiator begins an execution without having yet received a bit).
See Appendix for the proof of Theorem ~\ref{counting-lower}.

Once we have established our impossibility for $k=9$ and $n=17$,
we apply a reduction argument to generalize the results for larger values,
by showing such solutions could be used to solve our original fixed-value case.
This argument leverages the ability of the $n=17$ players to locally simulate
additional players without expending extra communication bits.

\begin{proof}(of Theorem ~\ref{counting-lower})
Assume for contradiction that there exists a protocol ${\cal P}$ that
computes the $(17,9)$-threshold detection function, denoted $f_9^{17}$,
with a local complexity of $1$.
We will prove that this protocol must sometimes output the wrong answer,
contradicting the assumption that its correct.
We will then generalize this argument to larger $k$ and $n$ values using a reduction argument.


Let $V_9^{17} = \{ (x,y):  0\le x \le 8, 1\le y \le 16, 0 \le (y-x)\le 8 \}$ be the set of
valid value pairs for $f_9^{17}$.
Simple counting establishes that $|V_9^{17}| = 80$.
By Lemma~\ref{lem:lbit:2}, every $(x,y)\in V_9^{17}$
must show up in some pair set $D(P_i,b)$.
Because there are $n=17$ possible players and $2$ possible bits,
there are $34$ possible pair sets.
The pigeonhole principle therefore establishes
that there exists a player $P_i$ and bit $b$
such that $|D(P_i,b)| \geq \lceil 80/34 \rceil = 3$.

Going forward,
we will  use this {\em target} pair $(P_i, b)$ to create our contradiction.
Consider the values in $D(P_i, b)$.
By the definition of $D$, each $(x,y)$ in this set is associated with at least one 
chain
that ends with step $(P_i, b)$, includes $y$ players, exactly $x$ of which have input bit $1$.
Call these {\em source} chains.
Label each pair in $D(P_i,b)$ with one of its source chains.
Further label each of these source chains with a compatible
input value assignment for the players in the chain (i.e., what
is the input assignment to these players that generates the chain;
choosing one arbitrarily if more than one assignment would create the same chain).

Because there are at least three pairs in $D(P_i,b)$,
there must be two such pairs, $(x_1, y_1)$, $(x_2, y_2)$,
such the initiators in their respective source chains
must have the same input bit $b_1$ in their compatible value assignment.
Notice, by Lemma~\ref{lem:1bit:lem1}, each of these source chains start 
with the same initiator.
To simplify notion, let us call this initiator $P_1$ for the purposes of our proof.
As will become clear, it is important that $P_1$ has the same input bit in both source
chains as it is possible that eventually full chain we consider will loop back to $P_1$.

Moving forward, we will use $X = (P_1,\emptyset), \ldots, (P_a,m_a), (P_i,b)$
to reference the relevant source chain associated with $(x_1, y_1)$,
and $I_1$ to be the relevant compatible input assignment.
We define $Y$ and $I_2$ analogously but now with respect to $(x_2,y_2)$.
Recall that by construction $P_1$ is assigned the same bit $b_1$ in $I_1$ and $I_2$.

We consider two cases concerning the players that shop up in chains $X$ and $Y$:

\bigskip

\noindent {\bf Case 1:} $players(X) \neq players(Y)$.

     By definition: both $(x_1,y_1)$ and $(x_2,y_2)$ are valid value pairs. 
     It follows that both are bivalent, meaning that the input bits of the players
     that have sent or receive bits so far are not sufficient to determine the value
     of the function. A straightforward contradiction argument establishes that no player
     in either chain can output until the chain extends further, 
     as if any player outputs $0$, the bits of the players
     not in the chain can be set to $!$ to make that answer incorrect, and if any player
     outputs $1$, the remaining bits can be set to $0$.
     Therefore, when we get to step $(P_i,b)$ in both chains, the output has not yet been determined.

     Because  $players(X) \neq players(Y)$, there must be a player $P_j$ in one set but not the other.
     Without loss of generality, say $P_j$ is only in $players(X)$.
     Fix any possible extension $Y,Z$ of $Y$ (where ``possible" means there is an input assignment to
     the players in $Z$ such that when combined with the fixed assignments for players in $Y$,
     $Y,Z$ describes the steps of the resulting execution).
     
     The key observation is that it must be the case that $P_j \notin players(Z)$.
     This follows because if $Z$ extends $Y$ then it also extends $X$,
     as both $X$ and $Y$ end with the same step: $P_i$ receiving $b$.
     However, $X,Y$ cannot occur because it features $P_j$ both in $X$
     and $Z$, meaning that this chain would require the same non-initiator player\footnote{We know
     that $P_j \neq P_1$ because it only shows up in on o the two chains, $X$ and $Y$, whereas
     $P_1$ is the single initiator in both.}
     $P_j$ to receive $2$ bits, which it cannot given our assumption of a local complexity of $1$.
     
     This observation creates an obstacle for the correctness of our protocol.
     We have just established that every way we can extend $Y$ must omit $P_j$.
     Consider the extension $Z$ that occurs when we fix the input bits of all
     players that are not in $players(X)$ and not $P_j$, such that the total number
     of $1$ bits is $k-1$. The execution corresponding to $Y,Z$ must eventually output.
     It does so, however, without $P_j$ sending a bit.
     If this execution outputs $1$, then it is incorrect in the case where $P_j$ has bit $0$,
     and if it outputs $0$, then it is incorrect in the case where $P_j$ has bit $1$.

\bigskip
\noindent {\bf Case 2:} $players(X) = players(Y)$:

If $players(X) = players(Y)$ then it follows that $y_1 = y_2$.
Because $(x_1,y_1) \neq (x_2, y_2)$, it also follows that $x_1 \neq x_2$.
That is, the number of $1$ bits encountered before $P_i$ receives $b$ is
different in $X$ versus $Y$. Player $P_i$, of course, receives the same bit
in both cases, so it must proceed without knowing if the count is $x_1$ or $x_2$.
The only player in these chains that can possibly receive another bit
is the common initiator $P_1$,
as only initiators send a bit before receiving any bit.
Since this initiator has the same input bit
in both $X$ and $Y$ (here is why it was important that we earlier identified
two chains that satisfied this property), our protocol must eventually output
without ever learning the true count of $1$ bits in the prefix leading up to $P_i$'s step.

To formalize this intuitive trouble, assume without loss of generality that
$x_1 > x_2$. Because $(x_1,y_1)$ is valid, we know $x_1 < k$.
Consider the extension $X,Z$ that occurs when we set exactly $k-x_1$ of the
players {\em outside} $players(X)$ to have input bit $1$.
The input assignment corresponding to $X,Z$ includes exactly $k$ $1$ bits,
therefore some step in $Z$ must correspond to a player outputting $1$.

If we consider this {\em same} input assignment for the players outside of $players(X) = players(Y)$,
we will get the {\em same} extension $Y,Z$, as the last step in $Y$ is the same as the last step in $Z$.
The set $players(Z)$ is disjoint from the set $players(Y)$ with the possible exception of $P_1$,
as it is possible that the initiator ends the chain it started.
By definition, however, all players in $players(Z)$ have the same input bit in 
the assignments corresponding to $X,Z$ and $Y,Z$ (recall, we selected $X$ and $Y$ specifically
because their corresponding assignments give $P_1$ the same bit),
and they receive and send the same bits in both, so the player in $Z$ that outputs $1$ in 
the execution corresponding to $X,Z$ also outputs $1$ in the execution corresponding to $Y,Z$.
This latter output, however, is incorrect, as the number of $1$ bits n the corresponding
input assignment is strictly less than $k$.

\bigskip

\noindent We have just established that any fixed protocol attempts to compute $f_9^{17}$ with local
complexity $1$ can be induced to output the wrong answer. This contradicts our assumption
that such a protocol exists.
We now use this result the generalize our impossibility to larger $k$ and $n$ values.

Fix any $k$ and $n$ values where $9 \le k \le n-8$, as specified by the theorem.
Assume for contradiction we have a protocol ${\cal P}$ that computes $f_k^n$ for these values with local complexity $1$.
We will now define a protocol ${\cal P'}$, defined for $17$ players,
that simulates ${\cal P}$ in a distributed
fashion to compute $f_9^{17}$ with a local complexity of $1$---contradicting our above result that no such protocol exists.

In more detail, protocol ${\cal P'}$ has the players in $P_{real} = \{P_1,\ldots, P_{17}\}$
collectively simulate the players in $P_{sim} = \{P_{18},\ldots P_n\}$,
such that first $k-9$ players in $P_{sim}$ start with input bit $1$, and the rest (if any remain) with input bit $0$.
Our assumption that $k \leq n-8$ ensures that there are at least $k-9$ players in $P_{sim}$ to initialize with a $1$ bit
(as $k \leq n - 8$ implies that $k - 9 \leq n - 17 = |P_{sim}|$, as needed).
Notice, the output in this simulated setup is $1$ if and only if at least $9$ of the players in $P_{real}$ have input but $1$.
Therefore, if we can correctly simulate ${\cal P}$ in this setting we can compute $f_9^{17}$.

We are left then to show how to correctly implement this simulation.
We can assume without loss of generality that the single initiator in $P_1$ (as established
by Lemma~\ref{lem:1bit:lem1}).
It begins by running ${\cal P}$ as specified. If the protocol has it
send
a bit to a player in $P_{real}$,
then it can send the bit as specified.
If it is instead instructed to send a bit to a player in $P_{sim}$,
it simulates locally that player receiving the bit and simulates that player's subsequent
send. It continues this simulation until a bit is sent to a player in $P_{real}$,
at which point the bit is actually sent to that player by $P_1$. Continuing in this manner, 
${\cal P'}$ can simulate ${\cal P}$ running on all $n$ players.

Two properties support the correctness of this simulation. First, each
player in $P_{sim}$ can receive at most one message, so each player only needs 
to be simulated once, eliminating the need for multiple players in $P_{real}$
to coordinate the simulation of a single $P_{sim}$ player. 
Second, given a chain that starts with a player $P_i\in P_{real}$, moves through one or more players
in $P_{sim}$, and then ends at a player in $P_j\in P_{real}$, it is valid for $P_i$ to send a bit
directly to $P_j$, as $P_i$ saved the bit it was instructed to send to a $P_{sim}$ player by ${\cal P}$ (as
it just locally simulated this communication), and the local complexity model does not convey
the source of a received bit, so $P_j$ cannot distinguish from which player an incoming bit was sent.
 \end{proof}

\subsection{ Generalizing from Threshold Detection to Counting}

We now leverage our result on threshold detection to derive a 
lower bound on any protocol that solves counting.
The reduction here is similar in construction to the argument deployed
in the preceding proof to generalize the $(17,9)$-threshold detection result to larger values of $n$.

 \begin{theorem}
\label{counting-lower2}
For every $17 \leq k \leq n$,
it follows: $lcc(g_k^n) > 1$.
\end{theorem}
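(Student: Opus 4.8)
The plan is to reduce threshold detection (Theorem~\ref{counting-lower}) to counting by a padding-and-simulation argument that mirrors the generalization step at the end of the proof of Theorem~\ref{counting-lower}. Suppose for contradiction that for some pair $17 \le k \le n$ there is a protocol $\mathcal{P}$ computing $g_k^n$ with $lcc(\mathcal{P}) = 1$ (that $lcc(g_k^n) \ge 1$ is immediate for $k \ge 17$, so we are only ruling out the value $1$). From $\mathcal{P}$ I would build a protocol $\mathcal{P}'$ on $k$ players that computes the threshold function $f_9^{k}$ with local complexity at most $1$. Since $k \ge 17$, applying Theorem~\ref{counting-lower} with network size $k$ and threshold $9$ is legal (indeed $k > 9$ and $9 \le 9 \le k - 8$), and it yields $lcc(f_9^{k}) > 1$ --- contradicting the existence of $\mathcal{P}'$. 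Note the parameter arithmetic lines up exactly: $k \ge 17 \iff 9 \le k - 8$, so the hypothesis $17 \le k \le n$ of Theorem~\ref{counting-lower2} is precisely what makes the reduction go through.

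The protocol $\mathcal{P}'$ would have the $k$ real players $P_1,\ldots,P_k$ collectively simulate the extra players $P_{k+1},\ldots,P_n$, each initialized with input bit $0$. Because $g_k^n$ ignores $a_{k+1},\ldots,a_n$, an execution of $\mathcal{P}$ on input $(a_1,\ldots,a_k,0,\ldots,0)$ still has some player output $\#_1(a_1\cdots a_k)$. Whenever the run of $\mathcal{P}$ routes a bit to a simulated player, the real player about to send it instead simulates that player (and any further simulated players the chain passes through) locally, and actually sends the bit directly to the next \emph{real} player the chain reaches; if the chain terminates inside the simulated region with some player outputting a value $x$, the simulating real player performs that output. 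The only new ingredient beyond the reduction inside Theorem~\ref{counting-lower} is the final conversion: every output of a count value $x$ --- whether originally produced by a real or a simulated player --- is replaced in $\mathcal{P}'$ by the output of the single bit $[\,x \ge 9\,]$. Since $\mathcal{P}$ is correct, all such outputs equal $\#_1(a_1\cdots a_k)$, so all players of $\mathcal{P}'$ that output, output the same bit, and on input $(a_1,\ldots,a_k)$ this bit is exactly $f_9^{k}(a_1,\ldots,a_k)$.

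Two facts make the simulation legitimate, exactly as in Theorem~\ref{counting-lower}: because $lcc(\mathcal{P})=1$, every (real or simulated) player receives at most one bit in any execution, so no simulated player is ever entered by two distinct chains and no coordination among real players is needed to simulate a given one; and because our model does not reveal the source of a received bit, a real player sending directly to another real player (skipping the locally simulated sub-chain between them) is indistinguishable from the original routing. Chains that in $\mathcal{P}$ originate at a \emph{simulated} initiator can be launched by $P_1$ during its own initialization step, which is well defined since, again, $lcc=1$ forces distinct chains to be vertex-disjoint at the simulated players; alternatively one first proves the single-initiator property for counting protocols verbatim as in Lemma~\ref{lem:1bit:lem1}. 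Finally, each real player sends and receives in $\mathcal{P}'$ exactly the bits it sent and received in the corresponding execution of $\mathcal{P}$, so $lcc(\mathcal{P}') \le lcc(\mathcal{P}) = 1$, which is the contradiction we wanted.

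I expect the main obstacle to be bookkeeping rather than a new idea: one must take care that (i) the count-to-threshold conversion is applied wherever the count would have been announced, including inside the simulated region, and (ii) the simulation correctly handles chains that begin at simulated players, which is why it is worth either stating a single-initiator lemma for $g_k^n$ or explicitly delegating those chains to $P_1$. With these points handled, no case analysis is needed and the argument is short.
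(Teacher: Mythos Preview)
Your proposal is correct and follows essentially the same approach as the paper: assume a local-complexity-$1$ counting protocol, simulate the extra $n-k$ players with input $0$ to pass to a $k$-player instance, and convert each count output $x$ to the threshold bit $[x\ge 9]$, contradicting Theorem~\ref{counting-lower}. The paper separates this into two steps (first $(n,k)$-counting $\to$ $(k,k)$-counting, then $(k,k)$-counting $\to$ $(k,9)$-threshold) whereas you fold them into one, and you are more explicit than the paper about handling outputs and chain-initiations that occur at simulated players; both are cosmetic differences rather than a genuinely different route.
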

\begin{proof}
Assume for contradiction that there exists a protocol ${\cal A}$
that solves $(n,k)$-counting with local complexity $1$ for some $17 \leq k \leq n$.
We can use ${\cal A}$ to define a new protocol ${\cal A'}$
that solves $(k,k)$-counting also with local complexity $1$.
To do so, we deploy the same strategy from the reduction argument
deployed in the proof of Theorem~\ref{counting-lower},
and have the $k$ players participating in protocol ${\cal A'}$
execute ${\cal A}$, locally simulating the $n-k$ {\em extra} players expected by ${\cal A}$.
They can simulate these extra players all starting with input bit $0$.

We now have a protocol ${\cal A'}$ that solves $(n', n')$-counting for some $n' \geq 17$.
We can use ${\cal A'}$ to compute the $(n',9)$-threshold detection function
in a network of size $n'$:
run ${\cal A'}$; if ${\cal A'}$ has one of the first $8$ players output $1$,
then that same player outputs $0$ for the threshold detection result; otherwise,
if a player beyond position $8$ outputs $1$ in ${\cal A'}$, that same player
outputs $1$ for the threshold detection result.

By Theorem~\ref{counting-lower}, however,
$(n',9)$-threshold detection cannot be computed for $n' \geq 17$ with local complexity $1$:
a contradiction.
\end{proof}

\section{Counting Upper Bounds}
\label{sec:upper}

In the previous section, we proved that you cannot count to $17$ with only a single bit
of local communication complexity.
Here we explore how much additional complexity is required to count to higher values.
We divide this investigation into three questions: 
(1) what is the largest $k$ such that we can solve $(n,k)$-counting with a local complexity of $2$?;
 (2) what local complexity is required to solve $(n,n)$-counting?; and (3) what other problems
 can be easily solved with low local complexity using these counting strategies as a subroutine?

We tackle the second question first, describing how to solve $(n,n)$-counting
with constant local complexity.
This disproves the reasonable conjecture that the local complexity of $(n,k)$-counting must grow
as a function of $k$ (e.g.,  $\log{k}$).
We  then turn our attention to the question of how high we can count with
a local complexity of only $2$.
Our solution,
which deploys  ideas from our $(n,n)$-counting protocol in a more complex construction,
solves $(n,(n/10))$-counting, demonstrating a stark discontinuity between $1$ and $2$
bits of local complexity.
Finally, we establish two corollaries that deploy these strategies
 to solve both sorting and search with constant complexity.

\subsection{Solving (n,n)-Counting with Constant Local Complexity}
We begin by considering $(n,n)$-counting, which we prove can be solved with local complexity $11$.
As mentioned, this disproves the natural conjecture that the local complexity of $(n,k)$-counting must 
grow with $k$. For ease of presentation, we begin with a strategy that assumes $n$ is a power of $2$.
This result can be generalized to an
arbitrary $n$ at the cost of a more involved protocol.

We formalize this result below in Theorem~\ref{thm:sym}.
Its proof depends on the construction of a counting protocol
that carefully minimizes the number of bits each individual node sends or receives.
Given the importance of this strategy to all the results that follow in this
section,
we begin with a high-level summary of our protocol before proceeding with
its formal description and analysis in the proof of Theorem~\ref{thm:sym}.

\bigskip

{\bf Protocol Summary:}
At a high-level, the protocol
that establishes Theorem~\ref{thm:sym}
operates in two phases.
During the first phase, a count of the number of $1$ bits is aggregated into
a distributed counter in which $\log{n}$ nodes each hold a single counter bit.
In slightly more detail, we start by partitioning the nodes into $\Omega(n)$
groups of constant size, and for each group aggregating the count of their $1$
bits into a distributed counter of constant size.
We then begin repeatedly pairing up counters and having them sum up their
values in a distributed manner using strategies derived from arithmetic circuit design, 
allowing them to calculate a sum without any single node involved
in these counters needing to send or receive more than a constant number
of bits.

At the end of the first phase,
we have aggregated the total count into a distributed counter of size $\log{n}$.
In the second phase,
the nodes that hold the counter bits
help direct a descent through a binary tree with one leaf for each possible count.
The goal is to arrive at the leaf corresponding to the value stored in the counter,
consolidating knowledge of the entire count at a single node.
To do so, each bit of the counter informs the nodes implementing its corresponding level
of the tree its counter bit value,
propagating it in a chain of transmissions to prevent too much local communication.
Therefore, when the tree descent arrives at each level,
the specific node at which it arrives knows which sub-tree on which to advance the descent.

The proof that follows  details each of the steps that makes up these phases,
carefully accounting for the exact number of bits sent and received
in their implementation. 

\bigskip

{\bf Formal Result:}
We now show that when implemented and analyzed carefully, the local complexity of the  protocol
summarized above is no more than $11$.

\begin{theorem}
For every $i \geq 1$ and $n = 2^i$ there exists a protocol that solves $(n,n)$-counting with a local communication complexity of 
$11$. This protocol can be used to compute any symmetric boolean function with the same local complexity.
%
\label{thm:sym}
\end{theorem}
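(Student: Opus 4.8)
The plan is to construct the protocol in the two phases outlined above and then account, step by step, for the bits sent and received at each player. The central subroutine is a \emph{distributed binary adder}: given two $k$-bit numbers whose bits are held one-per-player by $2k$ designated ``input'' players, the adder employs $2k$ additional ``scratch'' players and $k+1$ ``output'' players (so $5k+1$ players in all) to place the $(k+1)$-bit sum, one bit per output player, while every participating player sends and receives only a small constant number of bits. Concretely this is a ripple-carry adder laid out across players: the player holding $a_j$ (resp.\ $b_j$) forwards its bit into the $j$-th full-adder gadget, that gadget computes the sum bit $s_j$ and carry $c_j$ using a couple of scratch players, and $c_j$ is forwarded into the $(j{+}1)$-st gadget. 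I would first state and verify this adder lemma --- it is the only place arithmetic actually happens --- and observe that chaining it (using the output players of one addition as input players of the next) costs each such player only one extra received and one extra sent bit.

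Phase~1 then builds one distributed counter for the whole input. Partition the $n$ players into blocks of constant size and, inside each block, apply the adder a constant number of times to obtain a constant-width distributed counter holding that block's count; then repeatedly pair counters and add them, roughly doubling the width every level, until after $\log n$ levels a single distributed counter of $\log n + 1$ bits holds $\#_1(a_1\cdots a_n)$. Since there are only $n$ physical players but the logical roles (counter bits at all levels, plus scratch) total $\Theta(n)$ with a constant strictly larger than $1$, the key point is recycling: a player whose counter bit has just been consumed as an \emph{input} to a higher-level addition is thereafter idle and can be reused as a scratch or output player for an even higher level. A recurrence of the form $U(k) = 3\cdot 2^{k-2} - (k+3)$ for the number of players consumed in forming a $k$-bit counter gives $U(\log n) \le n - (\log n + 3)$, so $n$ players suffice and, crucially, each player winds up acting in only a bounded number of adder gadgets.

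Phase~2 converts the distributed counter into a single output. If the most significant bit (which is $1$ only when the count equals $n$) is set, the player holding it simply outputs $n$. Otherwise I would use the recycled players to form a complete binary tree with $n$ leaves, one per possible count in $\{0,\dots,n-1\}$, whose $j$-th level is ``owned'' by the $j$-th most significant counter bit. A counter bit must inform every node on its level, which no single player can do within budget, so it is propagated along a predetermined chain through that level --- each node receiving it once and forwarding it once --- and the seeding of successive levels is sequenced by a token so that every tree node learns its control bit before the descent reaches it. A descent token starting at the root is then forwarded, at each level, to the left or right child according to the control bit held there; after $\log n$ steps it arrives at exactly one leaf, whose index is the count, and that leaf outputs it. For an arbitrary symmetric $f$, the same leaf instead outputs $f$ evaluated on that count at no extra communication cost, which yields the theorem's second sentence.

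The main obstacle --- and the part that must be executed carefully rather than merely sketched --- is the bookkeeping that simultaneously (i) fits all the $\Theta(n)$ logical roles of both phases onto exactly $n$ physical players via the recycling schedule, and (ii) proves that no player, across every adder gadget, seeding chain, and the tree descent it touches, ever sends or receives more than $11$ bits. Pinning the constant at $11$ forces deliberate choices of the block size, the adder layout, and the order of recycling so that the few roles any one player plays (at most an input and an output of adjacent-level additions, a bounded amount of scratch duty, and one tree-node or leaf role) have their small per-role costs sum to at most $11$; the extension to $n$ not a power of $2$ is then handled by a separate, more intricate construction.
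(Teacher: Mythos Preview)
Your two-phase architecture matches the paper exactly: aggregate the count into a distributed binary counter via a tree of additions, then descend a binary search tree steered by the counter bits, each bit being propagated along its level by a chain before the descent token arrives. The one substantive difference is the adder you choose. You build Phase~1 around the $5k{+}1$-player ripple-carry gadget (inputs, outputs, and $2k$ scratch players). In the paper that gadget is isolated as a separate lemma and is used only for the later $(n,n/10)$ result where the target is local complexity~$2$. For the present theorem the paper instead uses a leaner inline adder with \emph{no} scratch players: a ``deposit'' phase sends the two addend bits for position $j$ directly into the $j$-th output player, and then a ``carry-add'' phase ripples a third bit through, so each output player receives exactly $3$ bits and sends $2$ (its carry now, its value later as an input to the next level). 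That is why the recurrence $U(k)=2U(k-1)+(k+1)$, with closed form $U(k)=3\cdot 2^{k-2}-(k+3)$, counts only $k{+}1$ fresh locations per level; coupling that particular recurrence with a $5k{+}1$-player adder, as you do, is internally inconsistent and would alter the player budget. The paper's final tally is then $5$ bits for the base case (a linear four-bit chain inside each block of $16$), plus $3$ for one participation in an addition, plus $3$ for the tree phase (each player serves once as an internal node and once as a leaf), giving $5+3+3=11$. Your variant would work in principle, but as you yourself note, it would need its own player-recycling schedule and bit accounting; it does not, as written, pin the constant at $11$.
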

\label{thm:counting}
\begin{proof}

We describe the repeated binary addition bottom-up process to store the count in binary.  
Now imagine a complete binary tree $T$ with $n/16$ groups as leaves.  
The protocol proceed one level at a time, starting from the leaf level, until it reaches the root.  
At each level, the protocol maintains the number of $1$'s in the sub-tree rooted at that level.

At the leaf level, we group 16 leaves at a time.  There are $n/16$ leaves. 
Starting from the first group of 16, run a simple and naive count protocol to count the number of $1$'s and 
once the count is complete a message is sent to the first member of the next leaf to start the process. 
Within a group of 16, run a four-bit protocol from first member of the group to the 16th/last member of the group to count the number of $1$'s in a linear chain fashion. 
At the end, the sum is represented as 5 bits. The last member retains the least significant bit of the sum and 
sends one bit each to members 12, 13, 14, 15 such that these bit values put together form the sum in binary.  
As explained before, the 16th member of the group then sends a bit to the first member of the next leaf to start the counting process.   
In the end, the last leaf sends a bit to start the addition process at the next level of the tree $T$.  
The recipient of the message is predetermined and will become clear when the processing of the next level is explained.   
At the end of the leaf-level processing, each member sends and receives at most 5 bits each. 

We now describe a bottom-up counting protocol that computes the sum of these counts in binary using a simple addition with carry and 
store the results of intermediate sums as binary.  The bits of the resultant binary number are stored distributively where every 
member stores a bit of the binary sum.   This significantly reduce the  local complexity.  
In order to show this, we keep track of the number of members available and the number of members used thus far in the process.   
Suppose we have computed the sum of $1$s in groups of size $2^k$.  We will show how to compute the sum for a group of size $2^{k+1}$.
In each group of size $2^k$, the binary bits of the number of $1$s are kept in $k+1$ distinct locations.  Let $U(k)$ be the number of
locations used exactly once for a group of size $2^k$.  Since exactly $k+1$ new locations are needed for the group of size $2^k$, 
the recurrence relation for $U(k)$ is $U(k)= 2 U(k-1) + k+1$ and $U(4) = 5$.  Solving this recurrence relation, we get $U(k) = 3~2^{k-2} - (k+3)$.
Since $U(k) \le 2^k$, for all $k \ge 4$, each location stores the sum at most once and the locations can be fully pre-specified for each iteration.
Each member has full knowledge of this participation.

In the computation of the sum for a group of size $2^k$, two sub-groups of size $2^{k-1}$ each has the sum stored in $k$ locations each.  
There are two phases in this computation.  In the first phase, called deposit phase, the bits to be added and deposited into a new location each. 
In the second phase, called carry-add phase, the bits are added and the carry is rippled and the carry information, the third bit, 
is a signal to perform the computation.     

The deposit phase for group of size $2^k$ begins after the completion of carry-add phase for all sub-groups of size $2^{k-1}$ and it is initiated when a bit is received from the player who completes the carry-add 
phase for the last group of size $2^{k-1}$.  In the deposit phase, the two least (respectively ith-least) significant bit locations 
from previous computations send their bits to the least (respectively ith-least) significant bit location for the resultant sum for the group of size $2^k$.

Once this depositing process is complete, the last member will send a message "0" to the member representing least significant bit of the new sum to start the carry-add phase. When a member of the new sum has received three bits, it computes the sum bit and the carry bit.  
It stores the sum bit and sends the carry bit to the next location.  
Recall that there are many groups of size $2^k$ exist and all of them must be calculated before we move on to groups of size $2^{k+1}$.

While carry information represents the third bit which triggers the calculation, the most significant bit calculation a group of size 
$2^k$ sends a 0-bit to the least significant bit for the next 
group to continue the calculation.  
The most significant bit calculation of the last group of size $2^k$, sends a message to start the deposit phase for group of size $2^{k+1}$.  
It is important to note that who participates in what is fully determined beforehand and everyone has full knowledge of this information.

We now calculate the number of bits sent and received by any node during this process.  In the base case, each node sends and receives at most 5 bits each.  
Since each node participates in the calculation of one sum, it receives 3 bits (two bits plus a carry or control bit), 
and sends two bits (resultant sum bit to the next group-size and a carry bit to the current group-size).

Note that final sum is in the range $[0,n]$ where both $0$ and $n$ are included.  It occupies $1+ \log n$ bits and the most significant bit is "1" if and only if the sum is $n$.  If this is the case, then the member corresponding to the most significant bit of the sum can declare the output of function.  For now, let us assume that the sum is less than $n$ where the resultant sum stored in $\log n$ locations.   We now describe a method to let one member know the sum without sending all $\log n$ bits to the node.  It is this process that fails when we try to compute a bi-symmetric functions which we will talk about later.  

All $n$ members participate in a Binary Search Tree, once as a leaf and once as an internal node of the binary search tree.  We will set the root of the tree and $\log n -1$ leftmost descendants of the tree to be the $\log n$ members with the final count bit each.  The root contains the most significant (that is $\log n$th) bit assuming the sum is less than $n$ while the significance of the bit decreases as we descend the tree in the left most path.  Each member containing the sum bit on the left-most branch of the tree will send the bit value, called control-bit, to all members on the same level of the tree in a sequential fashion.  This starts with the root, when a level finishes the message passing, the last member sends a message "0" to the leftmost member of the tree one level below the current level.  So each member sends and receives one bit.  When the last level finishes its processing, the last member sends a "0" message to the root to start the descending process.  

Each leaf node has a number which starts with $0$ and ends in $n-1$ and they appear in order from left to right.  Depending the value of the control-bit, the root sends a message "0" to the left, if the control-bit is zero or to the right child if the control-bit is one. Upon receiving a message any intermediate node will send a message "0" to the left or the right child as per the control bit it has.  At the last level, message "0" is sent the left or right child.  When a node receives the last message "0", it consults its designated number in the range $[0, n-1]$ and outputs the value of the function.  Each member sends and receives at most one bit.  Except for the last bit, the total number of bits sent and received each by a member is at most 2. Therefore local communication complexity is at most 5+3+3 = 11.
\end{proof}

For a tighter result, a more involved construction and analysis
can achieve the same complexity of $11$ even if $n$ is not a power of $2$.
We omit these details for the sake of concision.

\subsection{Solving (n, (n/10))-Counting with Local Complexity of 2}

We now turn our attention to counting with a local complexity of $2$.
We show that even with this small amount of communication, counting up to $k=\Omega(n)$
is possible.
The proof for this theorem
deploys the same general tree-based counter aggregation and subsequent
dissemination strategies introduced in  our $(n,n)$-counting solution.
We now, however,
carefully implement these strategies in such a way that the
the $(9/10)n$ nodes {\em not} counting their inputs
can collaborate with the $n/10$ nodes that {\em are} counting to reduce the number of bits
they need to send and receive  from $11$ down to $2$.

\bigskip

We begin by isolating and analyzing a key step of this efficient simulation:
how to leverage helper nodes to implement the distributed counter addition
strategy from our $(n,n)$-counting solution with a local complexity of only $2$. Recall that
the simpler implementation of this addition step in our $(n,n)$-counting solution,
in which there were no helper nodes present to reduce communication,
induced a local complexity of $5$ bits.

\begin{lemma}\label{binary-addition}
Suppose $x \ge 3$ is an integer.  Assume two sets of $x$ players with one input bit each, where each collection of $x$ bits is interpreted
as a binary integer.  
There exists a  protocol with local complexity $2$ that computes the binary sum of these two numbers and stores the result 
in a third set of $x+1$ players,
using an additional set of $2x$ players to support the computation,
leading a total of $x+x+(x+1)+2x = 5x+1$ total players involved.
 Though the overall local complexity is $2$,
the $2x$ players involved in storing the two input numbers send $1$ bit each and receive none during the computation and $x+1$ players storing and the resultant sum send at most $1$ bit each during the computation.
\end{lemma}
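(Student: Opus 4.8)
The plan is to implement a textbook ripple‑carry adder, but laid out so that every participating node reads at most two bits and writes at most two bits, exploiting the fact that in our model a node holding two received bits may send two \emph{different} functions of them to two different destinations. The one thing that makes this nontrivial is the player budget: we are given only two helpers per bit position, whereas the obvious decomposition of a full adder into fan‑in‑$2$ gates (two half‑adders plus an OR) uses three gates beyond the inputs — one too many. The key trick will be to fold the XOR that produces the sum bit $s_i$ into the very result player that must \emph{store} $s_i$ (the statement still allows that player one outgoing bit), and to split the carry recurrence $c_{i+1} = \mathrm{maj}(a_i,b_i,c_i) = g_i \vee (p_i\wedge c_i)$ into an ``$\wedge$ at one node, then $\vee$ at the next''.

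Concretely, write the operands as $A = a_{x-1}\cdots a_0$ and $B = b_{x-1}\cdots b_0$, let $r_0,\dots,r_x$ be the result players ($r_i$ to hold $s_i$ for $i<x$, and $r_x$ to hold the top carry $c_x$), and for each position $i$ let $h_i,h_i'$ be its two helpers. In a \emph{deposit} phase, triggered by the $init$ events, the two input players holding $a_i,b_i$ each send their single bit to $h_i$ and receive nothing; once $h_i$ has both it computes the propagate and generate bits $p_i = a_i\oplus b_i$ and $g_i = a_i\wedge b_i$, sends $p_i$ to $r_i$, and sends $g_i$ to $h_i'$. In the sequential \emph{carry‑add} phase the carry ripples through the positions; position $0$ is a minor boundary case since $c_0 = 0$, so we let $h_0$ finish by sending $s_0 = p_0$ to $r_0$ and $c_1 = g_0$ on toward position $1$ ($h_0'$ is then a free relay). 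For $1\le i\le x-1$, player $r_i$ waits until it has received both its carry‑in $c_i$ (from the previous stage) and $p_i$ (from $h_i$); it stores $s_i = p_i\oplus c_i$ and sends the single bit $w_i = p_i\wedge c_i$ to $h_i'$, which — now holding $g_i$ and $w_i$ — computes $c_{i+1} = g_i\vee w_i$ and forwards it to $r_{i+1}$. Finally $h_{x-1}'$ sends $c_x$ to $r_x$, which stores it. Correctness is exactly the ripple‑carry identity, checked case‑wise on $p_i$: when $p_i=0$ we get $s_i=c_i$ and $c_{i+1}=g_i$, and when $p_i=1$ (so $g_i=0$) we get $s_i = \lnot c_i$ and $c_{i+1}=c_i$.

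For the complexity bound, tally each class of player: every input player sends exactly one bit and receives none; each $h_i$ receives two and sends two; each $h_i'$ receives two ($g_i$ and $w_i$) and sends one; each $r_i$ with $1\le i\le x-1$ receives two ($c_i$ and $p_i$) and sends one ($w_i$), while $r_0$ and $r_x$ receive one and send none — so $lcc = 2$ and the sharper per‑class bounds claimed in the statement hold. The players number $2x$ (inputs) $+\,2x$ (the $h_i,h_i'$) $+\,(x+1)$ (results) $= 5x+1$, as required, and the construction makes sense for every $x\ge 1$, in particular for $x\ge 3$. It remains only to note that the model's sender anonymity and asynchrony cause no trouble: each node receives a fixed, predetermined number of bits, and every local operation used ($\oplus,\wedge,\vee$) is symmetric, so a node need not learn which neighbor sent which bit and may process its incoming bits in whatever order the scheduler delivers them. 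The main obstacle throughout is exactly this tight accounting — respecting the five‑players‑per‑position budget is what forces both the merge of the sum‑XOR into a result node and the two‑stage split of the carry; once those two ideas are in place, the rest is bookkeeping.
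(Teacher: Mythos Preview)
Your construction is correct and is essentially identical to the paper's: your helpers $h_i$ and $h_i'$ play exactly the roles of the paper's $P_{t_i}$ and $P_{c_i}$, with $h_i$ collecting $a_i,b_i$ and emitting the propagate/generate pair, $r_i$ absorbing the incoming carry to produce $s_i$ and the partial carry $p_i\wedge c_i$, and $h_i'$ OR-ing that with $g_i$ to feed $r_{i+1}$. Your boundary handling at $i=0$ and your remarks on symmetry/anonymity are a bit more explicit than the paper's, but the algorithm and the per-player accounting match.
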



\begin{proof}
Let  $a_x \ldots a_3a_2a_1$ be bits defining the first number a $b_x \ldots b_3b_2b_1$ be the bits defining the second.
Let $r_{x+1}r_x \ldots r_3r_2r_1$ be the resulting sum.
In the following, we use the notation $P_{a_i}$, for a given labelled bit $a_i$,
to denote the players responsible for bit $a_i$.
In addition to these $3x+1$ players, we will use 
$2x$ additional players, which we label $P_{t_i}$ and $P_{c_i}$,
for each $1 \le i \le x$.



We now describe the computation.
For each $1 \le i \le x$,  $P_{a_i}$ and $P_{b_i}$  send their bits to $P_{t_i}$. 
Upon receiving two bits, each $P_{t_i}$ computes the XOR of the two bits and sends the result to $P_{r_i}$.
This value represents a tentative sum of the relevant two bits.   
Each $P_{t_i}$ also computes the AND of these two bits, encoding the tentative carry, and sends it to $P_{c_i}$. 
For $P_{r_i}$ to compute the final sum for this bit position,
it also needs to know the relevant carry, which it can receive from $P_{c_{i-1}}$.
Similarly, for $P_{c_{i-1}}$ to know the full carry to send to $P_{r_{i}}$,
it needs to learn not just the carry bit from $P_{t_{i-1}}$,
but also any carry resulting from the sum computed by $P_{r_{i-1}}$.


Let us pull together these pieces:
For $2 \leq i\leq k$, each $P_{r_i}$ and $P_{c_i}$ computes and communicates the following after receiving two bits in any order.  
$P_{r_i}$ computes XOR of the two received bits and stores it as the resultant sum bit.  
$P_{r_i}$ computes AND of the two received bits and sends it to $P_{c_i}$. 
$P_{c_i}$ computes OR of the two bits and sends it to $P_{r_{i+1}}$.  
$P_{r_{k+1}}$ stores the bit it received the most significant bit of the sum.
We can bootstrap the relevant processes in position $1$ to send the correct value on initialization
(e.g., $P_{r_1}$ has no carry bits to receive).
It is easy to verify that the resultant sum computation is correct and it meets the required communication bounds.
\end{proof}

We are now ready to describe a protocol that solves $(n, n/10)$-counting with local complexity of $2$.
This protocol will leverage the distributed adding strategy captured in the preceding lemma to 
effectively count $1$ bits among the first $n/10$ positions in an efficient manner.
The remaining $(9/10)n$ nodes will be used to implement the results, totalling, and carrying
roles needed by this addition.

\begin{theorem}
For every $n$ and $k$, such that $0 \leq k \leq n/10$,
there exists a protocol that solves $(n,k)$-counting with a local communication complexity of $2$.
\end{theorem}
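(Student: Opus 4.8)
The plan is to reuse the two–phase skeleton of Theorem~\ref{thm:sym} — first aggregate the number of $1$'s among $a_1,\dots,a_k$ into a distributed counter of $\Theta(\log k)$ bits, then walk down a binary tree to consolidate that count at a single node — but to exploit the hypothesis $k\le n/10$, which leaves at least $9k$ ``helper'' players (the positions we are not counting) free to absorb communication and push local complexity from $11$ down to $2$. The single reusable engine for the aggregation phase is Lemma~\ref{binary-addition}: it adds two $x$-bit distributed counters with local complexity $2$ using $2x$ helpers, and — crucially for chaining additions together — its $2x$ input players only \emph{send} a bit (and receive none) during the addition, while each of its $x{+}1$ output players sends at most one bit.

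\textbf{Aggregation phase.} First I would group the $k$ leaves into small blocks and, using one or two helpers per block exactly as in the proof of Lemma~\ref{binary-addition} specialized to $x\in\{1,2\}$ (a half/full adder), form a $2$- or $3$-bit count for each block with every participant sending and receiving at most $2$ bits. I would then build a balanced binary tree of additions over these counters, repeatedly invoking Lemma~\ref{binary-addition} on pairs of sibling counters (with widths $x=3,4,5,\dots$ as we climb), so that a node at height $j$ holds the $\Theta(j)$-bit count of its $\Theta(2^j)$ leaves. The chaining is precisely what the lemma buys: a counter bit produced as an \emph{output} of one addition has received $\le 2$ and sent $\le 1$ bit, so it can still serve as an \emph{input} to the parent addition (one more send, no further receives) without exceeding the budget of $2$. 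Scheduling the gadgets event-driven — each fires as soon as its incoming bits arrive — eliminates almost all explicit ``start'' signals; the few token-passing bits needed to order phases are handed to players that still have spare capacity, and, as in Theorem~\ref{thm:sym}, one checks via the recurrence $U(k)=3\cdot 2^{k-2}-(k+3)\le 2^k$ that new counter locations are never overloaded and can be pre-assigned, keeping the total number of counter and helper players used across all levels $O(k)$.

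\textbf{Dissemination phase, accounting, and reductions.} Once the total count occupies a distributed counter of $\approx\log k$ bits, I would build, on fresh helpers, a binary tree whose $k{+}1$ leaves are labelled $0,\dots,k$; each counter bit first copies its value (one extra send — keeping counter bits at $\le 2$ sent and $\le 2$ received) onto a short ``spine'' of helpers that relay that bit along its level of the tree in a chain; a token then descends from the root, turning left or right according to each level's relayed bit, and arrives at the leaf whose label equals the count, which outputs it. Every player in this phase sends and receives at most $2$ bits. Finally, for arbitrary $n$ and arbitrary $k$ (with $k=0$ handled by having one player output $0$ with no communication), I would use players $1,\dots,k$ as leaves and the remaining $n-k\ge 9k$ players as the helper pool, and verify that the non-leaf players actually consumed — $O(k)$ counter bits, $O(k)$ adder helpers, and $O(k)$ dissemination-tree nodes, roughly $6k$ in total — comfortably fit inside $9k$; the passage from block sizes that are powers of two to general sizes is handled as remarked after Theorem~\ref{thm:sym}.

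\textbf{Main obstacle.} The real work is the global bookkeeping: proving that \emph{every} player, over \emph{all} phases, sends at most $2$ and receives at most $2$ bits. This is tight at the phase seams — a player that stores a counter bit in one phase must retain enough of its budget to forward that bit in the next, and every control, token, and ``start'' bit must be routed only to a player with a free slot — and it must be reconciled with the second constraint, that the number of helper players stays within the $9k$ furnished by $k\le n/10$. These two demands pull against each other (cheaper per-player communication wants more helpers, the player budget wants fewer), and it is exactly the ``inputs only send'' guarantee of Lemma~\ref{binary-addition} together with the factor-$10$ slack that lets both be satisfied at once.
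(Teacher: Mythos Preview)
Your proposal is essentially the same approach as the paper's: partition the extra $9k$ players into helper pools, bootstrap the first couple of levels with hand-built half/full adders, then climb a tree of additions by repeatedly invoking Lemma~\ref{binary-addition} and exploiting its ``inputs only send, outputs send at most once'' guarantee to chain levels, finish with a fresh binary-tree descent on helpers, and cite the $U(k)$ recurrence to check that counter locations are never reused. The paper is more concrete---it fixes ten named groups and six explicit phases, and spells out the token-passing that sequences the dissemination---but your identification of the tight phase-seam bookkeeping as the real work is exactly right, and nothing in your outline diverges from how the paper resolves it.
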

\begin{proof}
Assume $m = 10n$ and so we are computing number of $1's$ in the first $n$ positions.
As in our proof of $(n,n)$-counting result, we assume for now that $n$  is a power of $2$.
(The same technique that eliminates this assumption for the $(n,n)$-counting case applies here,
but as before we omit for the sake of clarity.)
The computation follows the main idea of the proof of Theorem \ref{thm:sym}.  However, we
will use an additional $9n$ members to reduce the number of bits used in the 11-bit protocol to 2 
bits.  

We partition $m=10n$ players into 10 groups  of $n$ members each.  The goal is to count the number of $1$'s in the first group.   
The input contained in the remaining 9 groups will be ignored and the players will be used to support the
counting of $1$'s in the first group.   

The protocol is divided into six phases.  In phase 1, the second group of $n$ players
will receive information from first group in the following way. 
Partition the input bits of first group into collection of two positions at a time (say $a_i, a_{i+1}$) and perform a simple binary addition of the two bits for each n/2 collections and store the 2-bit output of 
the binary addition in the corresponding two positions (say $b_i, b_{i+1}$) in the second group. 
It is easy to see that this can be accomplished where
both players (say $a_i$ and $a_{i+1}$, $i$ is odd) in the first group send one bit to each of the two players $b_i$ and $b_{i+1}$. The players in the second group receives two bits each but each player can still send 2 bits.  

In the second phase, the third group will receive information from the second group such that every four bit of 
the third group contains the sum, in binary, of the number of 1's in the corresponding four players of the 
first group.  Note that three bits are sufficient to store the binary value between $0$ and $4$ and the 
extra position is used to reduce the number of bits sent/received to 2, as shown below.  The protocol proceeds in the following
way.  Given two binary numbers $b_2 b_1$ and $b_4 b_3$,  $c_1$ receives both bits $b_1$ and $b_3$ while 
and $c_4$ receives both bits $b_2$ and $b_4$.  $c_4$ sends  $(b_2 ~ OR ~ b_4)$ to $c_2$ and 
$(b_2 ~ AND ~ b_4)$ to $c_3$.  $c_1$ sends  $(b_1 ~ AND ~ b_3)$ to $c_2$ and stores  $(b_1 ~ OR ~ b_3)$ in $c_1$.
$c_2$ upon receiving the second bit, sends the AND of these  bits it received to $c_3$ while storing the OR of these two bits in $c_2$.
Upon receiving two bits,  $c_3$ stores the OR of these two bits in $c_3$.  
Notice that each of the $c_1$ through $c_4$ receives at most $2$ bits 
while sending at most $1$ bit.  Observe that the result of the addition of $b_2 b_1$ and $b_4 b_3$ is in $c_3,c_2,c_1$.  
Each $c_i$ can still send $1$ more bit.
The binary count in $c_3,c_2,c_1$ represents the number of $1$'s in $a_1$ through $a_4$. 
This process is repeated so that the sum of number of $1$'s in every successive group of four $a_i$'s. 

In the third phase, we will show that $4$ players in the fifth group, namely $e_4, e_3, e_2,e_1$, 
will contain binary bits such that $e_4 e_3 e_2 e_1$ represents the count of
number of $1$'s in positions $a_1$ through $a_8$.  
In order to do so, we will perform simple binary addition of numbers $c_3,c_2,c_1$ 
and $c_7c_6c_5$ using the protocol explained in Lemma \ref{binary-addition} where the parameter $x=3$. 
We employ a total of $5x+1 = 16$ players.  
Six players from the fourth group, $d_1$ through $d_6$, four players from fifth group, 
$e_1$ through $e_4$,  and seven players from third group, $c_1, c_2, c_3, c_5,c_6$ and $c_7$ will be the 16 players that employ the protocol of Lemma \ref{binary-addition}.  It it not hard to see that the communication limitations of the Lemma \ref{binary-addition} is met.

In the fourth phase,  we set $5$ players in the seventh group, namely $g_5,g_4, g_3, g_2, g_1$, 
to contain binary bits such that $g_5g_4g_3g_2g_1$ represents the count of
number of $1$'s in positions $a_1$ through $a_{16}$. 
This is done by performing binary addition of $e_4 e_3 e_2 e_1$ and $e_{11} e_{10} e_9 e_8$.  
As before, we apply Lemma \ref{binary-addition} where the parameter $k=4$.
We use $5x+1 = 21$ players out of which 8 are carrying input bits, namely $e$'s, and 5 store the output bits, namely $g$'s.  We need additional 8 players from sixth group, namely $f_1$ through $f_8$, to perform the computation.

Starting with binary counts of $16$ $a_i$'s at a time, stored in $g$'s, 
we will perform repeated binary addition (bottom-up counting process) to compute the count of all $n$ locations.
Unlike the previous four phases where we needed additional group for each addition, 
the fifth phase performs all of the binary additions starting from $5$ bits to $\log n$ bits using only one additional
group $h_1, h_2, \ldots h_n$.   
This is the eight group of $n$ players.  
As we perform repeated additions using Lemma \ref{binary-addition}, each addition involves $5x+1$ 
players out of which $3x+1$ are input/output players.
Only $2x$ players perform intermediate computations.
Since $3x+1 \ge 2x$, the number of players used to perform intermediate computations is no larger than the number of players involved in the input/output parts of the process.  The proof of Theorem \ref{thm:sym} shows that at most $n$ players are used in storing input/output part of the addition process.
Therefore, for the entire collection of additions involving 
$5$ bits to $\log n$ bits, the total number of players who perform intermediate
computations is not larger than the number of players involved in the input/output parts of the addition.  
As argued in
the proof of Theorem \ref{thm:sym}, 
if the final sum is $n$ then the most significant bit is $1$ and it can declare the output.  
Otherwise, each player of the remaining $\log n$ bits  of the count will send its bit to 
another new player within the seventh group so that these new players  have capability to
send two bits instead of only one.  
This transition is possible within seventh group ($g$'s) since the recurrence relation 
$U(k) = 3~2^{k-2} - (k+3)$ implies $U( \log n) \leq n - (\log n +3)$.
Observe that $U(k)$ is the number of locations used once in the process.  
Note that we do not use the calculation process of Theorem \ref{thm:sym} since the number of bits sent and received
by a player exceeds $2$.  
We use the availability of the locations specified in the proof of  Theorem \ref{thm:sym} and 
perform computations as per Lemma \ref{binary-addition}.
Let $g_{f(\log n)}, g_{f(\log n -1)}, \ldots, g_{f(1)}$ be the final count of number of $1$'s in binary.  
Note that $g_{f(1)}$ is the least significant bit of the count.

The sixth and the last phase will contain two groups, namely the ninth group $i_1, i_2, \ldots , i_n$ and 
the tenth group $j_1, j_2, \ldots j_n$ of players where one of $j$'s will output the correct count.  
As in the proof of Theorem \ref{thm:sym}, the players $i_1, i_2, \ldots , i_{n-1}$ will form the internal nodes of a complete binary tree
and the players $j_1, j_2, \ldots, j_n$ will be the leaves.  
The root, we label level 1, of the binary tree will get the most significant bit, namely $g_{f(\log n)}$.  
The nodes in level $k$ will get $g_{f(\log n -(k-1))}$.   
The distribution of this bit starts with the least significant bit first and end with the most significant bit to the root.  
This process does not proceed in parallel but in a sequential process using token passing.  
This requires $g_{f(\log n)}, g_{f(\log n -1)}, \ldots, g_{f(2)}$ to receive a bit to start the seeding of the next level. 
This ensures that the first bit received by the nodes of the tree is the control bit.  
Once the root gets its control bit, it follows the binary search process described in proof of Therorem ~\ref{thm:sym}.
Only one leaf will receive a bit and it declares the correct count based on its position in the tree.
\end{proof}

\subsection{Solving Sorting and Searching with Constant Local Complexity}

The ability to count $1$ bits in the input with constant local complexity
enables the solution of other natural problems with this same low complexity.
We highlight two such problems here:

\bigskip

{\bf Sorting:}
The first problem is {\em sorting}.
In the context of binary inputs distributed among $n$ nodes, $P_1,P_2,\ldots,P_n$,
sorting reduces to gathering all the $1$ bits together.
Formally, if $\#_1(a_1\ldots a_n) = k$,
then to solve sorting for input $a_1\ldots a_n$,
the nodes in $\{P_1,\ldots,P_k\}$ should output $1$,
and the nodes in $\{P_{k+1},\ldots,P_n\}$ should output $0$.

The counting solution described in analyzed in Theorem~\ref{thm:counting}
has the nice property that not only does a node output $k$,
but the unique node that does so is $P_k$.
To extend this solution to sorting, therefore,
it is sufficient for $P_k$ to disseminate a bit down the line from $P_{k-1}$ to $P_1$,
letting these preceding nodes know that they should also output $1$.
This increases the local complexity by a single bit from $11$ to $12$.
Formally:

\begin{theorem}
The sorting of 1-bit inputs can be solved with local complexity $12$.
\end{theorem}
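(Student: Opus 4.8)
The plan is to reduce sorting to counting by reusing the protocol from Theorem~\ref{thm:counting} essentially verbatim, and then appending one additional communication step. First I would invoke the $(n,n)$-counting protocol of Theorem~\ref{thm:sym}, which has local complexity $11$, and recall the key structural property it guarantees: when the count of $1$ bits in the input is $k$, it is precisely player $P_k$ who learns this value (this is exactly the leaf of the binary search tree that is reached, since the leaves are numbered $0$ through $n-1$ in order, and one can align the indexing so that reaching leaf $k$ corresponds to player $P_k$). So at the end of the counting phase, $P_k$ knows $k$ and every other player has sent and received at most $11$ bits; crucially, I also need the (easily checked) fact that $P_1$ through $P_{k-1}$ and $P_{k+1}$ through $P_n$ have \emph{not} yet received any bit in the descent phase that would block them from receiving one more, i.e.\ their running local count is at most $10$ at that point --- this is where a little care is needed, since if some player on the left-most spine of the tree is already at $11$ bits we would have a problem.

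The second step is the dissemination: player $P_k$ initiates a chain that walks down through $P_{k-1}, P_{k-2}, \ldots, P_1$, where each $P_j$ in this range, upon receiving the ``you are a $1$'' token, outputs $1$ and forwards a single bit to $P_{j-1}$. Every player in $\{P_1,\ldots,P_{k-1}\}$ thus receives exactly one extra bit and sends exactly one extra bit; $P_k$ sends one extra bit and receives none extra; players $P_{k+1},\ldots,P_n$ receive and send nothing extra and simply output $0$ (which they can do at the end of the counting protocol, since a player who participates in the tree descent but is not the final leaf can be told to output $0$, and players outside the spine default to $0$). This adds at most $1$ to both the send and receive counts at any player, so the local complexity rises from $11$ to $12$. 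I would then observe correctness: the set of players outputting $1$ is exactly $\{P_1,\ldots,P_k\}$ and the set outputting $0$ is exactly $\{P_{k+1},\ldots,P_n\}$, which is the sorted arrangement by the definition of sorting for binary inputs.

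The main obstacle --- really the only subtlety --- is confirming that no player's local complexity was already saturated at $11$ when the new dissemination step begins, so that the ``$+1$'' accounting is valid for \emph{every} player simultaneously rather than just in an amortized sense. Concretely, I would revisit the bit-budget breakdown in the proof of Theorem~\ref{thm:sym} ($5$ bits in the leaf-level counting, then $3$ receives $+$ $2$ sends in the bottom-up addition, then at most $2$ in the control-bit seeding plus at most $1$ for the final descent bit) and verify that the players $P_1,\ldots,P_{k-1}$ that will receive the sorting token are not among those that hit the full $11$; if any of them do, I would note that the descent-phase bit they received can be ``charged'' to the sorting token (they are one and the same kind of signal), or, more simply, adjust which physical nodes play the internal-tree roles so the leaves $P_1,\ldots,P_n$ retain a spare bit. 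Handling the edge cases $k=0$ (no player outputs $1$; everyone outputs $0$) and $k=n$ (detected by the most-significant-bit shortcut inside the counting protocol, after which $P_n$ can start the chain, or the chain runs from $P_n$ down to $P_1$) rounds out the argument. I would also remark, as the paper does earlier, that the power-of-$2$ assumption on $n$ can be removed with the same more-involved construction used for counting, at no asymptotic cost.
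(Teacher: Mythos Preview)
Your approach is essentially identical to the paper's: run the $(n,n)$-counting protocol so that $P_k$ learns $k$, then have $P_k$ initiate a single-bit chain down through $P_{k-1},\ldots,P_1$ so those players output $1$ while the rest output $0$, adding one bit to the local complexity. The paper's argument is in fact terser than yours---it does not explicitly verify the saturation concern or the edge cases you flag---so your write-up is, if anything, a more careful version of the same proof.
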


\bigskip 

{\bf Searching:}
Another natural problem is {\em searching.}
In particular,
for a given binary input assignment $a_1\ldots a_n$,
we say that $P_i$ has the $k$th ``$1$'',
if: (1) $a_i = 1$, and (2) $\#_1(a_1\ldots a_i) = k$.
We can therefore define a search problem, parameterized with  $k$,
such that the goal is to output the id of the node with the $k$th one.
Building on our counting strategy,
we can also solve this problem with constant local complexity:

\begin{theorem}\label{kth}
For every network size $n = 2^i$ and search location $k \le n$, searching for the 
$k$th $1$ can be solved with
constant local complexity. 
\end{theorem}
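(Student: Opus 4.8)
The plan is to reduce searching for the $k$th $1$ to counting, reusing the machinery of Theorem~\ref{thm:sym} essentially unchanged, and then adding one more tree-descent phase to localize the answer. First I would run the $(n,n)$-counting protocol of Theorem~\ref{thm:sym} on a dedicated group of players, with one modification: instead of counting the number of $1$'s over all $n$ inputs, I want, for the relevant purpose here, the ability to detect the $k$th $1$. The cleanest route is a \emph{prefix-count} variant: have the first-phase aggregation compute, for each node $P_i$, whether $\#_1(a_1 \ldots a_i) = k$ \emph{and} $a_i = 1$. But that would require $n$ separate counters, which is too expensive. Instead I would exploit that exactly one node satisfies this condition (or none, if $\#_1(a_1\ldots a_n) < k$), and that the tree-descent phase of Theorem~\ref{thm:sym} already consolidates a distributed $\log n$-bit value at a single leaf. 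So the approach is: build a distributed adder tree as in Theorem~\ref{thm:sym}, but have each internal node of the tree remember not just the count of $1$'s in its subtree but also the range of input indices its subtree covers (this is static, predetermined, so costs no communication). The descent then navigates to find the unique leaf whose input bit is $1$ and whose left-siblings' counts sum to exactly $k-1$.

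\textbf{Key steps.} (1) Partition $\Theta(n)$ players into the counting-support structure of Theorem~\ref{thm:sym}, with the first group holding the inputs $a_1,\ldots,a_n$. (2) Run the first phase (bottom-up adder tree) so that each internal node at each level holds, in a constant-size distributed counter, the number of $1$'s among the inputs in its subtree; by Lemma~\ref{binary-addition} and the recurrence $U(k) = 3 \cdot 2^{k-2} - (k+3)$ analysis from Theorem~\ref{thm:sym}, this costs local complexity at most $11$ (or whatever the bound is) and uses at most $n$ storage players. (3) Run a modified second phase: instead of descending toward the leaf labeled with the total count, descend toward the $k$th $1$. At the root, we know whether the total count is $\geq k$; if not, some designated player outputs ``no such node'' (or we assume by hypothesis $k \le \#_1$, per the theorem's phrasing ``search location $k \le n$''). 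At each internal node reached by the descent, the node knows the count $c_L$ of $1$'s in its left subtree (it received this during phase one and stored one bit of it, but to \emph{compare} $c_L$ against the running target it needs the full value --- this is the subtle point, addressed below). If the remaining target is $\le c_L$, descend left; otherwise, decrement the target by $c_L$ and descend right. At the leaf, the node checks $a_i = 1$ and outputs its own id.

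\textbf{Main obstacle.} The hard part is step (3): at an internal tree node, comparing the running target (a $\log n$-bit quantity) against the left-subtree count $c_L$ (also $\log n$ bits) while every player sends and receives only $O(1)$ bits. In Theorem~\ref{thm:sym} this is sidestepped because the descent only needs \emph{one} control bit per level --- the $j$th bit of the final count steers the $j$th level. Here the comparison couples all the bits together. My plan to handle this is to not carry the target down the tree as a number at all. Instead, run $\log n$ \emph{parallel threshold-style descents}, one per bit position, or --- cleaner --- restructure so that the descent is driven bit-by-bit from the most significant bit down, exactly as in Theorem~\ref{thm:sym}, but where the ``count'' being matched is reconstructed incrementally: when we are at level $j$ having committed to a prefix of the path, the relevant comparison reduces to a single bit because the left-subtree counts along a root-to-leaf path, when summed, must hit $k-1$, and we process one bit of this running sum per level using the carry-chain trick of Lemma~\ref{binary-addition} threaded through extra helper players. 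Concretely, I would precompute at each internal node $v$ the partial sum $S_v$ of left-sibling subtree counts on the path from the root to $v$; these $S_v$ values can be laid out as distributed counters, and a single carry-propagating pass (Lemma~\ref{binary-addition}) lets the descent at level $j$ learn the one bit it needs. The bookkeeping --- ensuring no player is reused, that every message is $O(1)$ bits, and that the helper-player budget $5x+1$ per addition stays within $\Theta(n)$ total --- is the routine but delicate part; it parallels the accounting already done in Theorem~\ref{thm:sym} and the $(n,n/10)$ proof, so I would invoke those analyses rather than redo them. The final local complexity is a constant (a small additive increase over the $11$ of Theorem~\ref{thm:sym}), which is all the statement claims.
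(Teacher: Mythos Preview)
Your high-level plan---build the adder tree of Theorem~\ref{thm:sym}, then descend by comparing the running target to the left-subtree count, going left or subtracting and going right---is exactly what the paper does. You also correctly isolate the real obstacle: performing a $\log n$-bit comparison (and subtraction) at each of $\log n$ levels while keeping every player's load $O(1)$.

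Where you go astray is in resolving that obstacle. Your concrete proposal, ``precompute at each internal node $v$ the partial sum $S_v$ of left-sibling subtree counts on the path from the root to $v$,'' does not fit in the player budget: there are $\Theta(n)$ internal nodes and each $S_v$ needs $\Theta(\log n)$ bits, so storing all of them needs $\Theta(n\log n)$ players, not $O(n)$. Your alternative suggestions (parallel threshold descents, a bit-by-bit restructuring with carry chains) are left too vague to carry the argument.

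The paper's resolution is simpler and is the point you are missing. After phase~1 of Theorem~\ref{thm:sym}, \emph{all} the intermediate subtree counts $c(1,n/2)$, $c(n/2+1,n)$, $c(1,n/4)$, \ldots\ are still sitting in storage, each in its own disjoint set of players---that is precisely what the recurrence $U(k)=3\cdot 2^{k-2}-(k+3)\le 2^k$ guarantees. So at level $j$ of the descent you compare the current target against $c_L$ using the players that already hold $c_L$; they have not been touched since phase~1. The comparison is a single MSB-first pass, and the subtraction is the same ripple as addition. You then write the updated target into those same locations (overwriting the count you just consumed), so the target also lives in fresh players at the next level. Consequently every player participates in at most one comparison and one subtraction across the whole descent, and no global precomputation of $S_v$ is needed. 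Once the descent reaches a base group of $16$ inputs, a brute-force scan finishes in $O(1)$ local complexity.
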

\begin{proof}
First, run the protocol for  counting the number of ones in the  input where the result is in the binary form.  This is the first tree of the Theorem~\ref{thm:sym}.

Now store $k$ in binary in $\log n +1$ locations.  If $k=n$ then check the count is equal to $n$ and report the outcome.  

From now on we assume that $k< n$ and if there is $k$ ones it will be between $0$ through $n-1$.  This occupies $\log n$ bits in binary form.  
Let the count of $1$s to be $c(1,n)$.  Compare $k$ with $c(1,n)$.  This comparison can be done in $O(1)$ local communication complexity by starting with comparing most significant bit  first and find out if $k > c(1,n)$.  If $k > c(1,n)$ then there is no such $k$ exists.
Store $k$ in exactly the same place where $c(1,n)$ is stored.  
Recall that we have the counts $c(1,n/2)$ and $c(n/2+1,n)$ whose sum led to $c(1,n)$ still stored in appropriate places.  We compare $k$ with $c(1,n/2)$ in $O(1)$ local communication complexity to find out if $k > c(1,n/2)$.  

If the answer is yes, then perform subtraction $k-c(1,n/2)$.  The subtraction is very similar to the addition we performed in the Theorem~\ref{thm:sym}.  Store the result in exactly the same place where the bits of count $c(n/2+1,n)$ are stored.
Now you can compare this new $k$ with $c(n/2+1, 3n/4)$. 

But if the answer is no, then copy $k$ into where bits of $c(1,n/2)$ are stored and compare this with the count $c(1,n/4)$.

This process continues until we hit a final group of $16$ bits.  One can then easily find the original $k$th one in constant local communication complexity. 
\end{proof}

\section{Modular Arithmetic with Constant Local Complexity}
\label{sec:arithmetic}

In this section we turn our attention to 2-symmetric (also known as bi-symmetric) functions, in which the input bits can be partitioned into two sets, and the output of the function depends only
on the total count of $1$ bits in each set.

We focus in particular on balanced 2-symmetric functions, 
of the form $f:\{0,1\}^{2n} \rightarrow \mathbb{N}$,
where the partitions evenly divide the bits into two sets of size $n$.
One can  therefore interpret a function 
$f(a_1, a_2, \ldots a_n, a_{n+1}, \ldots a_{2n})$ of this 
type as calculating 
$g_f( \#_1(a_1...a_n), \#(a_{n+1}...a_{2n}))$,
for a function of the form $g_f : [n] \times [n] \rightarrow \mathbb{N}$.

We turn our attention balanced 2-symmetric functions in part because they are the natural next
class to consider after we established in the previous section that symmetric functions can be solved
with constant local complexity.
We emphasize that the local complexity jump from symmetric to 2-symmetric is non-trivial.
A straightforward counting argument establishes that there must {\em exist} 2-symmetric functions
with a local complexity in $\omega(1)$.
Identifying a specific function with this larger complexity would resolve a major open problem in circuit complexity.
This follows due to ability of any linear-sized circuit to be simulated in our setting by a protocol
with constant local complexity (e.g., see the discussion in~\cite{boyle2018bottleneck}). 
A function that cannot be solved with constant local complexity is a function that cannot be implemented
by a linear-sized circuit.

Specifically, we begin by studying many standard modular arithmetic functions on two operands.
Perhaps not surprisingly, given the connection between local and circuit complexity,
we identify solutions to all functions considered that require only constant local complexity.
We then build on these solutions to show that even the more complex GCD function can be implemented
with constant complexity. These results underscore the surprisingly power of distributed function computation
with a very small number of bits sent and received at any one node.

\subsection{Standard Arithmetic Functions with Constant Local Complexity}



We begin by studying standard arithmetic functions, including basic mathematical and comparison operations,
and their composition.
In all cases, we prove that constant local complexity is sufficient.
A reasonable starting place for designing a protocol to compute 
a given balanced 2-symmetric function $g_f$ is to first run two instances of
our $(n,n)$-counting solutions in parallel on $a_1...a_n$ and $a_{n+1}...a_{2n}$.
This allows some player $i$ in the first partition to learn $x=\#_1(a_1...a_n)$ and 
some player $j$ in the second to learn $y=\#_1(a_{n+1}...a_{2n})$.
If $i$ shared $x$ with $j$ (or vice versa) then the function can be computed.
These values, however, might be much too large to directly share while preserving constant local complexity.
Therefore, more complexity is required for each partition to learn the necessary information
about the other partition's count.

Formally, we prove the following is possible with constant local complexity: 

\begin{theorem}
Suppose $n+1$ is a prime number.
A balanced 2-symmetric function $f:\{0,1\}^{2n} \rightarrow \mathbb{N}$ can be computed with constant local complexity if the corresponding function $g_f : [n] \times [n] \rightarrow \mathbb{N}$ is:\\
$~~~~~$ 1. a function of only one of the two variables with range $\{0,1, \ldots , n\}$,\\
$~~~~~$ 2. comparison operator ($>, \geq, = ,\leq, <)$ on two variables with output $0$ or $1$,\\
$~~~~~$ 3. a modular (mod (n+1)) addition, subtraction, multiplication or division, exponentiation or \\
$~~~~~$ 4. a constant applications/composition of functions from first three items in this list.

\end{theorem}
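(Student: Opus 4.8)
The plan is to reduce everything to a single core primitive: \emph{given that player $P_x$ in the first partition holds the count $x$ and player $P_y$ in the second holds $y$, deliver to some single player enough information to evaluate $g_f(x,y)$, all with $O(1)$ local complexity.} We already have, from Theorem~\ref{thm:sym}, two parallel counting trees that arrange for exactly one player in each partition to learn its partition's count, with that player's identity equal to the count. The obstacle is that $x$ and $y$ can be as large as $n$, so neither side can simply ship its count to the other. The key idea for items 1--3 is that the output is \emph{also} a number in $[0,n]$ (or a single bit), so we never need both operands assembled at one node --- we only need to route one bit of ``trigger'' to the unique node whose identity encodes the answer. Concretely, I would build, for each basic operation, a bipartite routing pattern: the node holding $x$ sends a single bit to the (unique, predetermined) node in a fresh group of $n$ players that corresponds to the answer \emph{as a function of $x$ and the already-fixed structure}, and symmetrically the node holding $y$ sends a single bit to the node corresponding to the answer as a function of $y$; the answer node is the one that receives both (or, for unary-in one-variable functions, just one). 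Since the recipient cannot see the sender, and each player in the fresh group receives at most $2$ bits and sends none, this stays within constant local complexity.

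For item 1 (a function of a single variable), this is immediate: only the partition holding the relevant variable participates, and $P_x$ (the node whose identity is $x$) simply forwards a token to the predetermined node $P_{g_f(x)}$ in an output group of $n+1$ players, who outputs. For item 2 (comparisons), I would use the classic ``meeting in the middle'' trick enabled by having two linearly ordered groups of $n$ players each: $P_x$ sends a token rightward along positions $1,2,\dots,x$ of an auxiliary chain while $P_y$ sends a token leftward along positions $n, n-1, \dots, n+1-y$ of the same chain; whether the two tokens pass each other determines $x \le y$ versus $x > y$, and the crossing (or non-crossing) can be detected locally by the at-most-one player who receives two tokens (or by a designated endpoint if they do not meet), each player again sending/receiving only $O(1)$ bits. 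Equality is the boundary case $x=y$ and is detected the same way. For item 3 (modular addition, subtraction, multiplication, division, exponentiation mod $n+1$), I would exploit that $n+1$ is prime: fix a primitive root $g$ modulo $n+1$. Using a single unary-indexed table (a precomputed permutation of $[n]$ baked into the protocol, which costs no communication since it is part of the fixed code), the node holding $x$ can translate position $x$ into position $\log_g x$ by a one-hop forward to the node at that table entry; then addition of discrete logs plus a second table lookup (anti-log) realizes multiplication, negation of a discrete log realizes division, and scaling a discrete log by the known constant $y$ --- wait, for $x^y$ the exponent is a variable, so instead: $x^y \bmod (n+1)$ has discrete log $y\cdot \log_g x \bmod n$, which is a \emph{product} of two quantities each in $[0,n)$; so exponentiation reduces to one multiplication mod $n$ (not mod $n+1$) after taking a discrete log, then one anti-log. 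Modular addition and subtraction mod $n+1$ are done directly by the crossing-chain / wrap-around-chain construction (route $x$ steps then $y$ more steps around a cycle of $n+1$ players; the stopping node's index is $x+y \bmod (n+1)$). Each such step is a constant number of single-bit hops through fresh groups of $\Theta(n)$ helper players.

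For item 4 (constant-depth composition), I would argue by a simple induction on the composition depth: each basic operation, as implemented above, takes a constant number of partitions/groups as input (each ``input'' being the identity of a distinguished node in a group of size $\Theta(n)$ that encodes an intermediate value in $[0,n]$) and produces its output in the same encoding --- one distinguished node, encoding the value, in a fresh group of $\Theta(n)$ players. Because the output encoding matches the input encoding and each stage uses a disjoint batch of helper players, a constant-depth formula over these operations can be evaluated stage by stage, using $\Theta(n)$ fresh helpers per stage and hence $\Theta(n)$ helpers total (a constant times $n$), which is within our $n$-player budget if we rescale, or more cleanly by noting the theorem only asks for \emph{constant} local complexity, not a bound on the number of players. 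The local complexity is the max over stages of the per-stage local complexity, which is $O(1)$ by the base cases, so the composition is $O(1)$ as well.

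\textbf{Expected main obstacle.} The delicate point is \emph{propagating a value of size up to $n$ to the right single node without any node receiving more than $O(1)$ bits}. Sequentially walking a token $x$ steps down a chain is fine (each chain node sees $O(1)$ bits), but we must ensure (i) the token is launched only after the counting tree has finished and the identity-of-the-count node is known, which requires carefully serializing the phases with ``start'' signals exactly as in the proof of Theorem~\ref{thm:sym}; (ii) that the table-lookup hops for discrete log / anti-log are genuinely single hops into a group whose internal wiring is entirely predetermined, so no node is asked to ``compute'' a large value, only to forward a token according to its hardwired position; and (iii) that the two tokens (one per partition) are routed through \emph{disjoint} helper groups except at their single meeting node, so that the at-most-one node receiving two bits is exactly the node whose index is the answer. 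Verifying the primality-of-$(n+1)$ hypothesis is used only to guarantee the multiplicative group is cyclic (so discrete logs exist and division is total on nonzero residues) is straightforward but must be stated; handling the residue $0$ and the boundary values $x=0$, $y=0$, $x=n$, $y=n$ as explicit special cases is the kind of routine-but-necessary bookkeeping I would relegate to the end of the proof.
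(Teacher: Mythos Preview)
Your overall scaffolding matches the paper's: run two copies of the counting protocol so that a player at index $x$ in the first half and a player at index $y$ in the second half are each distinguished, then manipulate these positional encodings with $O(1)$-hop routing. Item~1 and item~4 are handled just as in the paper, and your exponentiation (discrete log, multiply exponents mod $n$, anti-log) is exactly the paper's. Your multiplication route via discrete logs differs from the paper's polarization identity $4ij \equiv (i+j)^2-(i-j)^2$, but both are valid \emph{given a working addition primitive}.

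That is where the genuine gap lies. Your addition primitive --- ``route $x$ steps then $y$ more steps around a cycle of $n{+}1$ players; the stopping node's index is $x+y$'' --- does not work. After the first leg the token sits at cycle position $x$, and that player knows only its own index; for the second leg to halt after exactly $y$ further hops, some cycle player must carry a stop-marker at position $x+y$, but $Q_y$ cannot place one there (it does not know $x$), and a marker at position $y$ would stop the walk at $y$, not $x+y$. More generally, the endpoint of any single-token walk on the cycle learns only its own index, never the number of hops taken, so no arrangement of one start point and one stop marker recovers $x+y$ at a single node. The paper supplies the missing idea: it reduces subtraction (and hence addition, via $i+j\equiv i-(n{+}1{-}j)$) back to \emph{counting}. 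Sweeps originating at the two distinguished positions deposit a ``special $1$'' at exactly $i-j$ (respectively $n{+}1-(j-i)$) positions and ``special $0$'' elsewhere; a fresh invocation of the $(n,n)$-counting protocol on those special bits then yields the answer in positional form. This sweep-then-recount trick is the engine behind all of item~3, and without it your multiplication, division, and exponentiation reductions have nothing to stand on.

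A smaller slip: in your comparison sketch the $y$-token occupies positions $n,\ldots,n{+}1{-}y$, so the two token ranges overlap iff $x+y\ge n{+}1$, which is not the relation $x\le y$. The paper instead places both distinguished indices on a single chain $0,\ldots,n$ and sweeps downward from position $i$; whether that sweep meets the distinguished $j$-position before reaching $0$ decides $i>j$ versus $i<j$.
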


\begin{proof}
We start with some basic assumptions that  $n+1$ is prime, and all arithmetic operations are modular $n+1$.  There is a natural linear order among numbers in $\{0,1, \ldots , n\}$ and so the comparison operators are well defined.

Given a 2-symmetric function $f:\{0,1\}^{2n} \rightarrow \{0,1, \ldots , n\}$, 
we will describe a protocol which notify the player $P_i$ to be responsible for the output if the output of $f$ is $i$.  

First we apply the counting protocol for $(n,n)$-Counting to find the number of  $1$'s in each of the two partitions.  
Let the two counts be $i$ and $j$ and players $P_i$ and $P_j$ know the count each.  But, player $P_i$ does not know $j$ and player $P_j$ does not know $i$.  However, the player knows if it has the first or second count.  

The protocol to compute the target function will consists of multiple phases where each phase is well defined.  The number of phases will be a constant.  Therefore any message from player $P_a$ to $P_b$ will consists of a phase number followed by the message.  For the ease of presentation, we will omit phase number information from the message exchanged between players.  Since the number of phases is a constant the asymptotic bound on the local complexity does not change if we omit phase number in the calculation.

We start with a protocol for computing a function that depends on only one of two variables, say $i$.    Given the function and the input $i$, $P_i$ computes the function by calculating the output $x \in \{ 0,1, \ldots , n\}$. $P_i$ sends a message to $P_x$, thus completing the computation.

We now establish protocol to determine if $i=j$, $i > j$ or $i < j$.
It is trivial to see if $i=j$, since player $i$ has both counts.  If $i \neq j$, player $i$ initiates a protocol with two phases.  Suppose $i >0$.  The first phase starts with $P_i$ sending $0$ to $P_{i-1}$.   Each player $P_k$, upon receiving a bit $0$ sends bit $0$ to the player $P_{k-1}$ until it reaches either player $j$ or $0$.  We call this a $0$-down sweep.  
We now consider the case where $P_j$ receives $0$.  $P_j$ will initiate the second phase by sending bit $1$ to $P_{j+1}$.  Each player, upon receiving $1$ sends $1$ to $j+1$ which will ultimately reach $i$, $1$-up sweep. 
Upon receiving $1$, $P_i$ knows that $i > j$.  

On the other hand, suppose $P_0$ receives $0$ and $j \neq 0$.  $P_0$ initiates $0$-up sweep which will terminate at $P_i$.  Upon receiving $1$, $P_i$ knows that $i < j$.

$P_i$ will inform $P_0$ through $P_n$ the result by sending
a bit $b$ (= $0$ if $i<j$ and $1$ otherwise) to $P_0$.  
Upon receiving $b$, $P_0$ will start a $b$-up sweep which ends with $P_n$.

We now consider the modular arithmetic.  The important operation is subtraction.   We now present a protocol to compute $(i-j)$.    $P_i$ knows that it has the first parameter and $P_j$ knows that it has the second parameter.
If $i \ge j$ then $x= i-j$ else $x = (n+1)-(j-i)$.  Note that $x \equiv (i-j)~ mod~ (n+1)$.  At the end of the protocol, $P_x$ knows it has $x \equiv (i-j)~ mod~ (n+1)$.

We run the protocol to determine if $i \ge j$ or not.
If $i=j$ the $P_i$, informs $P_0$ the answer, thus ending the computation.  
Now, suppose $i > j$.  Suppose $j >0$, a new phase starts at $P_0$ which initiates a $0$-up sweep which terminates at $P_j$.  During this sweep a special $0$ is stored at each player $P_0$ through $P_{j-1}$.  Upon receiving a bit, either from $P_{j-1}$ or from $P_i$ (happens when $j=0$), $P_j$ initiates $1$-up sweep which terminates at $P_i$. During this sweep, the players  $P_j$ through $P_{i-1}$ store a special $1$ each.  Now $P_i$ starts a $0$-up sweep which terminates at $P_n$.  During this sweep, the players $P_i$ through stores a special $0$ each.  Now, run the protocol to compute the number of special $1$'s among players $P_0$ through $P_n$.  The result is $(i-j)~ mod~ (n+1)$.

We now consider the case $i < j$.  We run three up sweeps.  This time we store special $1$'s at players $P_0$ through $P_{i-1}$ and $P_j$ through $P_n$ while storing special $0$'s at $P_i$ through $P_{j-1}$.   Counting the number of special $1$'s among players $P_0$ through $P_n$ results in $(i-j)~ mod~ (n+1)$.  This completes the computation of subtraction.

We now consider the computation of addition $i+j$.  
$P_j$ computes $-j$ and sends a message to $P_{n+1-j}$.
Observe $i+j \equiv i - (n+1-j)  ~mod ~n+1$.

Given $i$, $P_i$ computes $x \equiv i^2 mod (n+1)$ and sends a message to $P_x$, thus computing $i^2$.  
Simple algebra shows that multiplication can be computed by following the computations of the equation $4ij \equiv (i+j)^2 - (i-j)^2 ~mod~(n+1)$.  If $P_y$ knows $4ij$ then it can compute $z \equiv ij ~mod~(n+1)$ and 
sends a message to $P_z$.  All of the above operations can be computed even when $n+1$ is not a prime number.

When $n+1$ is a prime, inverse exists for every $1 \le i \le n$.  Therefore, division can be computed by observing $i/j \equiv i j^{-1} ~mod~ (n+1)$.  The protocol for division is straightforward and works if $n+1$ is a prime. 

Finally, we present a protocol to compute $i^j ~ mod ~ (n+1)$.  There is primitive root $g$ modulo $(n+1)$ such that for every $1 \le a \le n$ there is an integer $a'$ ($0 \le a' < n$) such that $a \equiv g^{a'} ~mod~ (n+1)$.   All players are aware of the generator and so $P_i$ finds $a$ such that $i \equiv g^a ~mod~ (n+1)$ and sends message to $P_a$.  
Now $P_a$ and $P_j$ runs the multiplication (modulo $n$) protocol to compute $b \equiv aj ~mod~ n$ which result in a message to $P_b$. 
Finally $P_b$ computes the result of the exponentiation $c \equiv g^b ~mod~(n+1)$ and sends a message to $P_c$.

If the output of an operation is $i$ the $P_i$ knows it. Therefore the local complexity of constant number of repeated applications such operations is also another constant.   
\end{proof}

\subsection{GCD with Constant Local Complexity}

While basic arithmetic functions can be computed with constant local complexity,
we turn to a seemingly more complex challenge: computing the greatest common divisor (GCD) of
the two set sizes. Standard GCD algorithms require the maintenance of a super-constant size value
over a super-constant number of rounds. Implementing such strategies therefore seem to necessarily
require a super-constant local complexity. Here we prove, however, that this is not fundamental
by describing and analyzing a protocol that solves the GCD problem with constant local complexity.
The protocol is more involved than any of the strategies we have previously discussed.  
It requires the nodes to first efficiently divide the two operands into their prime factors
and use a tree-based strategy to multiply these factors in such a way that we can extract
the common divisor. 

Formally:

%
%

\begin{theorem}
\label{arithmetic:gcd}
For $1 \leq x,y \leq n$, the local complexity of computing $GCD(x,y)$ is $O(1)$.
\end{theorem}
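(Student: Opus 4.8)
\begin{proofsketch}
The plan is to reduce $\gcd(x,y)$ to its prime factorization. Write $x=\prod_{p}p^{v_p(x)}$ and $y=\prod_{p}p^{v_p(y)}$ over the primes $p\le n$; then $\gcd(x,y)=\prod_{p}p^{e_p}$ with $e_p=\min(v_p(x),v_p(y))$. The single most useful fact is that $\gcd(x,y)\le\min(x,y)\le n$, so every partial product of the $p^{e_p}$ is also at most $n$: there is never any overflow, and in particular the modular (mod $n+1$) arithmetic subroutines developed above act \emph{exactly} on every quantity we handle. As in Theorem~\ref{thm:sym} and the modular-arithmetic theorem, $n$ may be assumed to be a power of two and $n+1$ prime, since padding the input with $0$'s up to the next convenient value changes neither $x$, $y$, nor $\gcd(x,y)$ and costs only a constant factor.

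First I would run the $(n,n)$-counting protocol of Theorem~\ref{thm:sym} on each of the two partitions, but stop before the binary-tree descent, leaving $x$ and $y$ in \emph{distributed binary}: $\log n$ players each holding one bit. I would then replicate each of these two $(\log n)$-bit registers $\Theta(n/\log n)$ times using replication trees (each stored bit is the root of a tree of forwarders with $\Theta(n/\log n)$ leaves); summed over the $\log n$ bits this uses $\Theta(n)$ helper players, each touched $O(1)$ times. The point of the copies is that the next step reads $x$ and $y$ from many places at once.

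Next, for every prime power $p^{j}\le n$ I would dedicate a small group of players that tests $p^{j}\mid x$ and $p^{j}\mid y$ using a fresh copy of the relevant register. Divisibility by a fixed constant $p$ is cheap --- $x\bmod p$ is a $\bmod\,p$ sum of the $O(\log n)$ precomputed constants $2^i\bmod p$ masked by the bits of $x$ --- and ``divisible by $p^{j}$'' is $j\le\log n$ iterated divisions by $p$; a careful count shows the total work over all $p^{j}\le n$ is $O(n)$ (the mass is on the $\Theta(n/\log n)$ primes in $(\sqrt n,n]$, each needing an $O(\log n)$-sized test). From these bits each prime learns $e_p=\min(v_p(x),v_p(y))$, and since $\gcd(x,y)\le n$ at most $\log_2 n$ of the $e_p$ are nonzero. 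Passing a constant-size token through the prime-indexed groups in increasing order, each group with $e_p\ge 1$ appends $(p,e_p)$ to a growing distributed list, so we finish with $\le\log n$ prime powers whose product is $\gcd(x,y)$. I would multiply these by schoolbook multiplication on distributed-binary registers --- partial-product rows formed from replicated copies of one operand, then summed by a depth-$O(\log\log n)$ tree of the adders of Lemma~\ref{binary-addition} --- each such multiplication using $\mathrm{poly}\log n$ \emph{disjoint} helpers and $O(1)$ local complexity, so the whole product costs $\mathrm{poly}\log n$ players. No overflow occurs, and the player whose identifier equals the final product outputs $\gcd(x,y)$.

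The main obstacle I anticipate is \emph{not} any one gadget but the global bookkeeping: the construction must be organized as $O(1)$ phases so that each of the $2n$ players is used only $O(1)$ times \emph{per phase}, and one must check that every phase --- the replications, the $\Theta(n/\log n)$ divisibility tests, the token walk, and the final product --- fits within a total player budget of $O(n)$. The delicate points are (i) replicating $x$ and $y$ exactly as often as there are divisibility tests, without a $\log$-factor blowup; (ii) implementing the divisibility/valuation tests by iterated division by a constant rather than by, say, a size-$\Theta(p^{j})$ automaton, so that their sizes sum to $O(n)$; and (iii) carrying out the final $\le\log n$ multiplications on \emph{disjoint} small player sets rather than reusing the $n$ ``mailbox'' players of the modular-arithmetic subroutines, which would push the local complexity up to $\Theta(\log n)$. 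Each of these is handled with the chain-of-forwarders and tree-descent techniques already used in Theorem~\ref{thm:sym}; the difficulty is the accounting, not new ideas.
\end{proofsketch}
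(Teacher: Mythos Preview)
Your high-level plan coincides with the paper's: both use $\gcd(x,y)=\prod_p p^{\min(v_p(x),v_p(y))}$ and the fact that every partial product is at most $n$. The implementations, however, diverge sharply. The paper never touches distributed-binary arithmetic for this step. Once players $P_x$ and $P_y$ know $x$ and $y$, each computes its own factorization \emph{locally} and sends a single bit to the player at index $p^{m}$ for each maximal prime power in its factorization; those players propagate down to $p^{m-1},p^{m-2},\ldots$ so that, with $O(1)$ bits per player, every prime-power index $p^j$ learns whether $p^j\mid x$ and $p^j\mid y$. For the product, the paper builds a fixed tree $G$ whose root-to-leaf paths enumerate the integers $1,\ldots,n$ via their prime factorizations (level $i$ branches on the exponent of $p_i$); since every internal node has fan-out $\ge 2$, $G$ has $O(n)$ nodes. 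Each $e_p$ is broadcast along a horizontal chain through the appropriate edges of $G$, and then a single root-to-leaf descent, choosing the marked edge at each level, lands directly on the leaf $\gcd(x,y)$. No list is ever built and no multiplications are performed.

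Your route via replicated binary registers and divisibility circuits is plausible, and your player-counting for the replication and the $\Theta(n/\log n)$ divisibility tests looks right. The step I would push back on is the sentence ``passing a constant-size token through the prime-indexed groups \ldots\ each group with $e_p\ge 1$ appends $(p,e_p)$ to a growing distributed list.'' Appending requires the group to know \emph{which} list slot is next free; a constant-size token cannot carry that pointer (it is $\Theta(\log\log n)$ bits), and without it you have exactly the compaction problem --- routing $\le\log n$ active items out of $\Theta(n/\log n)$ candidates to consecutive slots --- that does not fall out of a simple chain walk. This is fixable (e.g., a prefix-sum tree over the ``active'' bits gives each active prime its rank, which can then drive a tree-descent into the slot array, all in $O(1)$ local complexity and $O(n/\log n)$ helpers), but it is a genuine missing piece, not just bookkeeping. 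You identified (iii) as delicate; this is precisely where the delicacy lies, and the paper's tree $G$ is exactly the device that sidesteps it. A minor separate point: you cannot pad to an $n$ that is simultaneously a power of $2$ and one less than a prime --- those are Fermat primes and only five are known --- but since your argument never actually invokes the mod-$(n{+}1)$ subroutines, you can simply drop that assumption.
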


\begin{proof}
Let $x$ and $y$ be the two inputs such that $1 \leq x ,y \leq n$.  
Let the node be numbered $1$ through $n$ such that the node $x$ knows $x$ and node $y$ knows $y$.  If any of the two is a $1$ then the answer is $1$.  So we may assume now on that neither of the two is a $1$.  

Let $p_1 = 2$, $p_2 = 3$, $p_3 = 5$ and $p_i$ be the ith prime number.
Let $x = p_{r(1)}^{m_1} p_{r(2)}^{m_2} ... p_{r(k)}^{m_k}$ and 
$y = p_{s(1)}^{n_1} p_{s(2)}^{n_2} ... p_{s(l)}^{n_l}$ where 
$r(1) < r(2) < \ldots < r(n)$ and $s(1) < s(2) < \ldots < s(n)$.  The node $x$ will send a message "0" to two nodes.  
It will start to send a message to  the node $p_{r(1)}^{m_1}$ and the node $p_{r(2)}^{m_2} ... p_{r(k)}^{m_k}$.  
Now the node $p_{r(1)}^{m_1}$ will send a message "0" to the node $p_{r(1)}^{m_1-1}$.  
This message will be propagated until it reaches $p_{r(1)}$ with a message "0".  
Now the node $p_{r(2)}^{m_2} ... p_{r(k)}^{m_k}$ will send a message "0" to the node at $p_{r(2)}^{m_2}$ 
and to the node at $p_{r(3)}^{m_3} ... p_{r(k)}^{m_k}$.
This way one is able to distribute $x$ to $p_{r(1)}^{m_1}$ to $p_{r(1)}$, $p_{r(2)}^{m_2}$ to $p_{r(2)}$ and so on to $p_{r(k)}^{m_k}$ to $p_{r(k)}$.  Every node participates at most once in this process and the message is "0".
Similarly, the $y$ distributes its message of "1" to $p_{s(1)}^{n_1}$ to $p_{s(1)}$, $p_{s(2)}^{n_2}$ to $p_{s(2)}$ and so on to $p_{s(l)}^{n_l}$ to $p_{s(l)}$.  Every node participates at most once in this process of sending $y$ and the message is "1". 

Now we start sending "00" from the node $n$ to $1$. It is important that a "00" message does not over take a "0" or a "1" message.  This make sure that both messages "0" and "1" are reached their destinations.  
After a "00" message reaches the node $1$, the next stage begins. 

Let $p_{s(i)}$ be a prime number.  If there are two messages, namely a "0" and "1" reaches a node $p_{s(i)}^m$ then that node sends a message "0" to a node $p_{s(i)}^{m+1}$, if it exist, to find it has both messages "0" and "1". If it has both messages, then it sends the message "0" to $p_{s(i)}^{m+2}$ and otherwise, it sends a message back "1" to $p_{s(i)}^{m}$.  If you have received a message "1" then you are the lucky one to get both and you get to participate in the next phase.  On the other hand, if you are the last one to get both messages a "0" and a "1" then you are get to participate in the next phase.   This process takes place everywhere there are two messages, a "0" and a "1".  
Now, unlike the previous case where we sent a "00" top down, this time we send a bottom up from $1$ to $n$ a message "00".  
Like before, the message "00" does not have a preference over a message "0" or a message "1".  This "00" message makes sure that every message  reaches the destination.  At the end of the "00" message, we would have made sure that the message $p_{s(1)}^{o_1}$, $p_{s(2)}^{o_2}$, ... , $p_{s(j)}^{o_j}$ has been set where these are prime powers and $s{1} < s{2} < \ldots < s{j}$.
It is important to note that the correct answer is the product  $p_{s(1)}^{o_1} p_{s(2)}^{o_2} ... p_{s(j)}^{o^j}$.  But we have individual numbers $p_{s(1)}^{o_1}$, $p_{s(2)}^{o_2}$ ... $p_{s(j)}^{o^j}$.

\begin{figure}
\includegraphics[scale = 0.55]{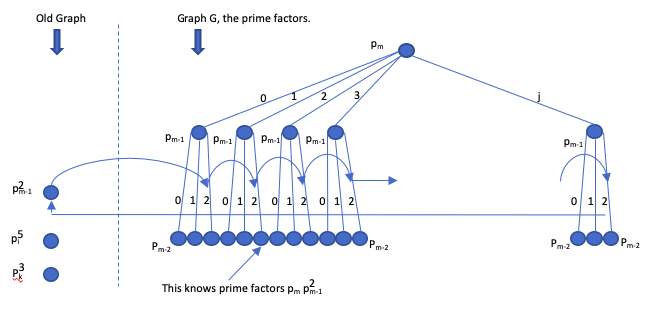}
\caption{Example of the tree structure (with horizontal chains) used to efficiently
calculate the product of prime factors of the operands.}
\end{figure}

Unfortunately we cannot apply a simple multiplication since there are too many of them.  Never the less, we can perform the operations and we show how to do so here.   
We form the edge label graph $G$. This graph $G$ has at most $2n$ nodes in it and so can be simulated by the network of size $n$.  
Let $m$ be set such that $p_m \leq n$ and $p_{m+1} > n$.
We form an edge labeled tree such that the root is a prime number $p_m$ and it has at least 2 branches, namely a branch $0$ and a branch $1$.  
In fact, a generic $p_j$ and $p_{j-1}$ has as many branches possible until we exceed $n$. The binary tree has exactly $i+1$ nodes where $p_j^i \leq n <  p_j^{i+1}$. Label the edges by $0$, $1$, and so on until we reach the last edge $i$.  

At level 1, we have vertices all labeled $p_{m-1}$. 
These are connected to the root which is $p_m$ and it is at level 0.  
Suppose an edge between $p_m$ and $p_{m-1}$ is labeled $j$. 
Following this, the node $p_{m-1}$ splits off at most $k$ times where 
$k$ is given by $p_m^j p_{m-1}^{k-1} \leq n < p_m^j p_{m-1}^k$.
The labels of these $k$ edges are  $0$, $1$, 
$\ldots$ $k-1$. This happens to all the nodes at level 1.

At level 2, we have vertices all labeled $p_{m-2}$.
These are connected to a vertex $p_{m-1}$ in level 1.
Suppose an edge between $p_m$ and $p_{m-1}$ is labeled $j$
and subsequently an edge between $p_{m-1}$ to $p_{m-2}$ is labeled 
$k$.  
Following this, the node $p_{m-2}$ splits off at most $l$ times where
$l$ is given by $p_m^j p_{m-1}^{k-1} p_{m-2}^{l-1} \leq n < p_m^j p_{m-1}^{k-1} p_{m-2}^l$.  
The labels of these edges are 
$0$,
$1$, $\ldots$
$l-1$.

We go all the way to $p_0$ where $p_0 = 1$.  
We have individual nodes from the original graph with label $p_{s(1)}^{o_1}$, $p_{s(2)}^{o_2}$ ... $p_{s(j)}^{o^j}$.  These are the ones to initiate the chain that link the edges horizontally.
In the edge labeled graph $G$, there is only one root, namely $p_m$.  There are many branches from the root, all of them lead to different nodes but they are named $p_{m-1}$.  For instance, a node named $p_{m-1}$ is linked to many nodes each named $p_{m-2}$.  This goes on until we hit $p_0$.  For each $k$ where 
$1 \le k \le j$, we have a chain that connects edges horizontally between edges labeled $o_k$ and lie between the nodes $p_{s(k)}$  and $p_{s(k)-1}$. 
This chaining process is linear and goes from one edge to another and to another until we hit the last edge connects back to $p_{s(k)}^{o_k}$ in the original graph.  This completes the process of building the tree.

Now notice that every internal node of $G$ has at least 2 branches. 
This is so since if we have one branch then the corresponding prime does not exists and so we move on to the next one.  
Therefore the resultant tree has at most $n-1$ internal nodes and $n$ leafs.  Therefore each node participate in at most $O(1)$ times. 

Now we are back to the protocol. 
We have individual numbers $p_{s(1)}^{o_1}$, $p_{s(2)}^{o_2}$, $\ldots  p_{s(j)}^{o_j}$.
We assume that  $s(1) < s(2) < \ldots < s(j)$. 
These individual nodes of the original graphs and we need to perform the multiplication of them. 

The first one is $p_{s(j)}^{o_j}$.   
But many edges have label that $o_j$ between nodes $p_{s(j)}$ and $p_{s(j)-1}$ in $G$ and they are chained together in the following way.  Send a message "1" from the node $p_{s(j)}^{o_j}$ to the graph $G$ marking all of the edges marked $o_j$  and between $p_{s(j)}$ and $p_{s(j)-1}$ with a "1". This is done in a cyclic fashion where the original sender sends a bit which is then forwarded by the edge one after another until the last edge sends it back to the sender.  
This take one message and hence a constant. 
Similarly, chain all edges in $G$ that corresponds to $p_{s(j-1)}^{o_{j-1}}$.  Repeat the process until finish chaining $p_{s(1)}^{o_1}$.
Since the edges are disjoint, we send one message to accomplish this task.  Finally, we mark every other edges a "0" in a cyclic fashion so that every one sends $O(1)$ messages.  This completes the chaining process.
    
A "11" bit has been sent to the node with the largest prime $p_m$ in the graph $G$ by the node $p_{s(j)}^{o_j}$ of the original graph. Now, the node $p_m$ of $G$ cycles through all the edges of $p_m$  to check if any of them received a special $1$ in the chaining process.   If it finds a bit "1" then it takes the edge and other wise it ends with first choice, that is $p_m^0$.  Note that this takes only one message from each node since they take turn in sending message to the next node.  After having found no $1$ message, the protocol takes the choice $0$.   
It then goes to the second prime, namely $p_{m-1}$ and search for a $1$ in the chaining process of $G$.  
After not finding it, it takes the choice of $0$ again.  
This way it goes to finally $p_{s_j}$ and finds the option $o_j$ with a "1" and takes it.  It means that the number is any where between $p_{s(j)}^{o_j}$ and the number obtain by multiplying this number by any prime numbers strictly smaller than $p_{s(j)}$.  From now on, your path lies under this option.  But the process continues and it looks at $p_{s(j)-1}$ and so on until it looks at $p_{s(j-1)}$.  In the mean time, all of $p_{s(j)-1}$, $p_{s(j)-2}$ etc. will be set to a zero until it hits $p_{s(j-1)}$. In this case, all of edges labeled $a$ be zero except $o_{j-1}$.  It will then add $p_{s(j-1)}^{o_{j-1}}$ to
$p_{s(j)}^{o_j}$ thus making $p_{s(j)}^{o_j} p_{s(j-1)}^{o_{(j-1)}}$.
But this means that we have a range of values starting from this number to anything that is a product of this number and the number obtained by multiplying this number with any smaller prime numbers, that is prime numbers less than $p_{s(j-1)}$.  From now on, you are under these two options. 
This process continues until it completes the $p_1$ there by completing the $p_{s(j)}^{o_j} p_{s(j-1)}^{o_{j-1}} \ldots p_{s(1)}^{o_1}$.
The result is the product of these primes which is what you need.
\end{proof}

\bibliography{refer}


\end{document}